\newtheorem{theorem}{Theorem}
\newtheorem{proposition}[theorem]{Proposition}
\newtheorem{definition}{Definition}%
\newcommand{\Si}{{\Sigma}}
\newcommand{\dm}{{\partial M}}
\newcommand{\hSi}{{\hat{\Sigma}}}
\newcommand{\hsi}{{\hat{\sigma}}}
\newcommand{\hdm}{{\hat{\partial M}}}
\newcommand{\Hi}{{\mathcal{H}}}
\newcommand{\dcg}{{\mathbf{Bord}_{d}^{q,G}}}
\newcommand{\coh}[3][2]{H^{#1}(#2,#3)}
\newcommand{\homo}[3][2]{{H_{#1}(#2,#3)}}
\renewcommand{\hat}{\widehat}
\newcommand{\bigslant}[2]{{\raisebox{.2em}{$#1$}\left/\raisebox{-.2em}{$#2$}\right.}}
\def \Hom{\mathrm{Hom}}
\def \R{\mathbb{R}}
\def \C{\mathbb{C}}
\def \Z{\mathbb{Z}}
\def \tn{{\mathrm{n}}}
\def \tg{{\mathrm{g}}}
\def \Bord{\mathbf{Bord}}
\newcommand{\sk}{{\mathrm{Sk}}}
\title{On finite group global and gauged $q$-form symmetries in TQFT}
\author[1]{Manuel Furlan}
\author[2]{Pavel Putrov}
\affil[1]{Dipartimento di Fisica, Universit\'a di Trieste, Strada Costiera 11, 34151 Trieste, Italy}
\affil[2]{ICTP, Strada Costiera 11, 34151 Trieste, Italy}
\date{}
\begin{document}

\maketitle

\begin{abstract} 
We describe a method to implement finite group global and gauged $q$-form symmetries into the axiomatic structure of $d$-dimensional Topological Quantum Field Theory (TQFT) in terms of bordisms decorated by cohomology classes. Namely, on a manifold with a boundary, the gauge field is considered as a class in an appropriate relative cohomology group. It is defined in a way that allows self-consistent cutting and gluing of the manifolds and involves a choice of a $(d-q-2)$-skeleton in the boundary. The method, in a sense, generalizes to arbitrary $d$ and $q$ a method that has been considered in the literature in the case of $d=3,\;q=0,1$.

\end{abstract}

\tableofcontents

\section{Introduction and summary}\label{sec1}
Since the seminal article \cite{Gaiotto}, generalized global symmetries have been a very active area of research in the theoretical and mathematical physics community. The basic idea is to understand symmetries in quantum field theories in terms of extended topological operators also referred to as ``defects''. In this approach, the usual charge operator of an ordinary, i.e. 0-form in modern terminology, symmetry is a codimension 1 topological defect supported on a spatial slice of the spacetime manifold.

More generally, for a group $G$, one can consider an invertible $q$-form symmetry in a $d$-dimensional spacetime as being described by $(d-q-1)$-dimensional topological defects labeled by the elements of the group. The fusion of such defects obeys the group law of $G$. In this paper we will restrict ourselves to such invertible symmetries for a finite abelian group $G$ (which must be abelian for $q>0$). One can consider in general a ``network'' of such topological defects. Mathematically such a network can be understood as a singular cycle in the spacetime manifold $M$ with coefficients in $G$.  For a closed spacetime manifold $M$, and the case of a non-anomalous symmetry, the fact that the defects are ``topological'' means that the partition function (without insertion of any other operators) of the quantum field theory depends only on the homology class of the cycle in $H_{d-q-1}(M;G)$. One can also consider its  Poincar\'e dual class $B\in H^{q+1}(M;G)$, which can be understood as a background gauge field for the discrete $q$-form symmetry $G$.

When $M$ has a non-trivial boundary $\partial M$ it becomes more subtle to describe such background gauge fields in terms of cohomology classes (or, equivalently, the defect networks in terms of homology classes). Of course, the description in principle should depend on what type of boundary condition on the gauge fields/defects one wants to consider. It is natural to ask what boundary conditions allow gluing different manifolds along boundary components, such that an arbitrary field/defect configuration can be realized in the resulting glued manifold. 

It is easy to see that the two most naive choices $B\in H^{q+1}(M;G)\cong H_{d-q-1}(M,\partial M;G)$ or $B\in H^{q+1}(M,\partial M;G)\cong H_{d-q-1}(M;G)$ cannot work for this purpose, as illustrated schematically in the Figure \ref{fig:intro1}.

    \begin{figure}
        \centering
        \includegraphics[width=0.5\textwidth]{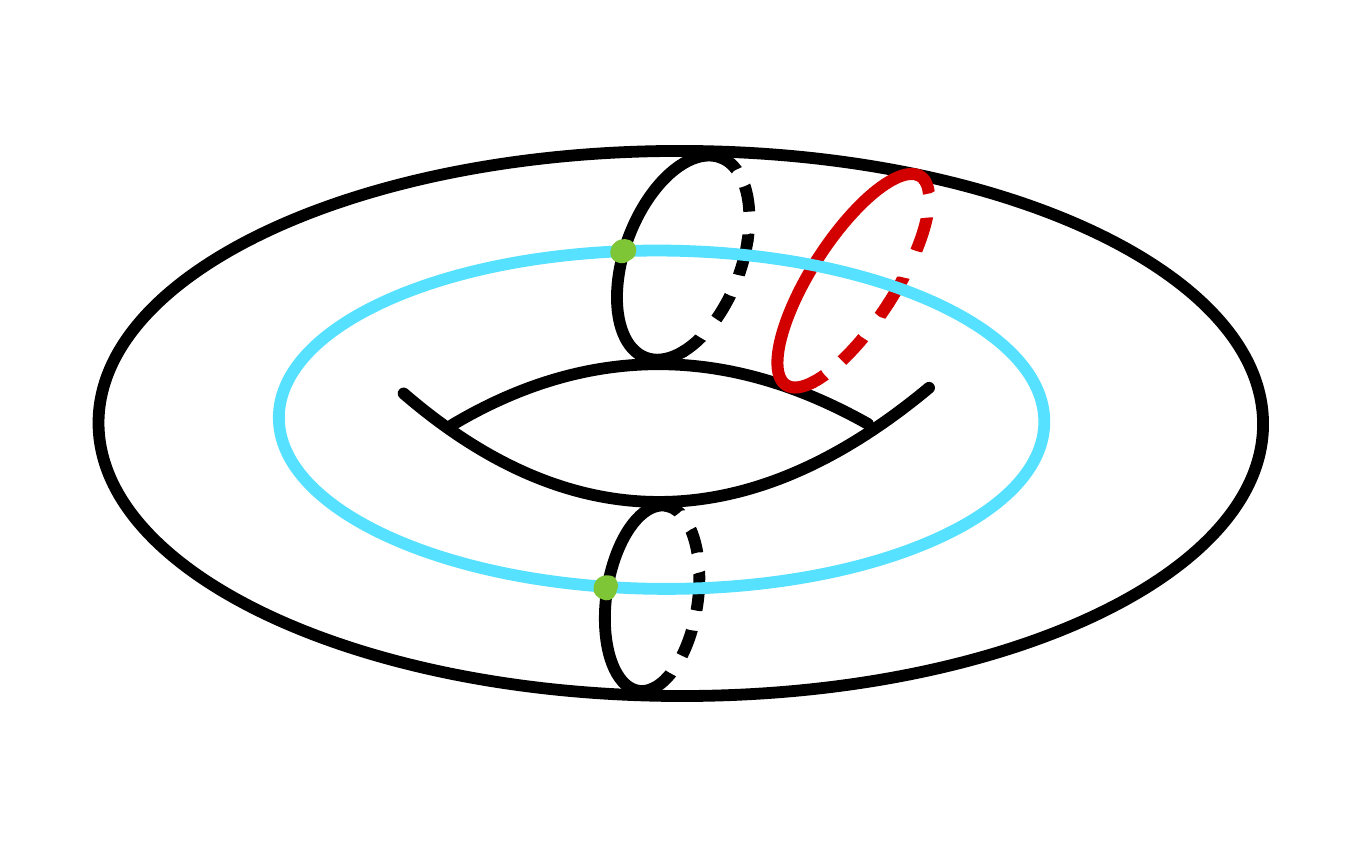}
        \caption{The case of $d=2$ and $q=0$. The 2-manifold $M=S^1\times S^1$ is obtained by gluing two cylinders $M_\pm\cong S^1 \times I$ (where $I:=[0,1]$ is an interval) along $S^1\sqcup S^1$. The red defect is an example of a defect that cannot be obtained from the defects in $M_\pm$ representing $H_1(M_\pm,\partial M_\pm;G)\cong H^1(M_\pm;G)$. The blue defect is an example of a defect that cannot be obtained from the defects in $M_\pm$ representing $H_1(M_\pm;G)\cong H^1(M_\pm,\partial M_\pm;G)$. }
        \label{fig:intro1}
    \end{figure}   

The main goal of this paper is to show systematically that taking instead $B\in H^{q+1}(M,\hat{\partial{M}};G)$, where $\hat{\partial M }:=\partial M\setminus \sk$ is the complement of a $(d-q-2)$-skeleton $\sk$ in the boundary $\partial M$, does the job. In particular, this allows us to define a category of bordisms decorated by classes in such relative cohomology groups. A TQFT with a global $q$-form symmetry $G$ then can be formally defined as a symmetric monoidal functor on it. This approach, in a sense, generalizes the one considered in \cite{blanchet2016non,Costantino:2021yfd,Benini:2022hzx}. There, the focus was on the case of $d=3$ and $q=0$ and $q=1$. In those works one chooses a single point in each connected component of the 2-dimensional boundary. Then, for $q=1$, one considers cohomology groups relative to complement of that collection of points in the boundary. The collection of points thus can be understood as the 0-skeleton of our general approach, where for a boundary component surface of genus $g$ one considers a standard cell decomposition consisting of a single 2-cell, a single 0-cell, and $2g$ 1-cells. For $q=0$ one considers cohomology groups relative to that collection of points. The 1-skeleton of the general approach then can be understood as the deformation retract of the complement of that point. 

A known common description of TQFTs with a global or gauged symmetry is via the homotopy quantum field theory \cite{turaev2010homotopy,Dijkgraaf:1989pz,Freed:1991bn}. In this setting, a TQFT with a global $q$-form symmetry $G$ can be defined as a functor on the category of bordisms equipped with maps to the Eilenberg-MacLane space $K(G,q+1)$, considered up to homotopy relative to the boundary. The composition of the morphisms is then defined in a straightforward way. However, one has to keep track of the restriction of the map to the boundary. Using relative cohomology class $B\in H^{q+1}(M,\hat{\partial{M}};G)$ instead of maps to $K(g,q+1)$ provides a much more expicit algebraic description. The relation between the two approaches can be schematically understood as follows. The choice of the $(d-q-2)$-skeleton corresponds to considering the maps to $K(G,q+1)$ that have the restriction to the boundary of a particular form. Namely, the maps are constant outside of a small neighborhood of the skeleton, which can be locally taken of the form $\R^{d-q-2}\times D^{q+1}$. In the neighborhood, the map can be considered locally to a product map, constant along $\R^{d-q-2}$ and such that the map on $D^{q+1}$ is a fixed representative of a homotopy class in $\pi_{q+1}(K(G,q+1))=G$, where the boundary of $D^{q+1}$ is identified with the basepoint of $S^{q+1}$.  

One can realize the procedure of gauging as an operation on a functor on the category of bordisms decorated by cohomology classes. The result is a new TQFT functor. The gauging consists of summing over the relative cohomology classes, with an appropriate overall factor. The factor gets rid of ``overcounting'' of field configurations when one glues manifolds along the common boundary. This makes the value of the new TQFT, i.e. the partition function of the gauge theory, functorial with respect to the composition of the bordisms. Specifically, the formula for the partition function of the gauged theory on a $d$-manifold $M$ with boundary $\partial M$, in terms of the partition function of the theory with global symmetry reads:
\begin{equation}
   Z_{\text{g},b}(M) = c(M) \sum_{\substack{B \in \coh[q+1]{M}{\hdm}\\ B|_{\partial M}=b}} Z(M,B)
   \label{gauging-intro}
\end{equation}
where
\begin{equation}
    c(M)=\prod_{i=0}^q \left( |\coh[i]{M}{\hdm} | |H^{i}({\partial M}) |^{\frac{1}{2}} \right )^{(-1)^{q-i+1}} 
\end{equation}
and all the cohomology groups have coefficients in $G$. When $\partial M=\emptyset$ the formula reduces to a well-known expression for the partition function of the gauged theory on a closed manifold. The formula above can be also refined by introducing a non-trivial background of the dual $(d-q-2)$-form dual symmetry $G^*=\Hom(G,\R/\Z)$.

The approach to the TQFTs with global and gauged symmetries considered in this paper has some similarities with the approach considered in \cite{Carqueville_2019,carqueville2023orbifolds,Carqueville:2018sld} where more general ``defect TQFT'' and their gauging is considered. In our setting, due to the restriction to invertible defects of a fixed dimension, a more explicit algebraic description can be provided.

The rest of the paper is organized as follows. In Section \ref{sec2} we provide the definition of the bordism category of manifolds decorated by cohomology classes and a TQFT with a global $q$-form symmetry $G$ as a functor $\mathcal{Z}$ on this category. In Section \ref{sec3} we consider the gauging procedure as an operation on taking the TQFT functor $\mathcal{Z}$ as an input and having as an output the gauged TQFT functor $\mathcal{Z}_\tg$. In Section \ref{sec4} we describe how the gauged TQFT $\mathcal{Z}_\tg$ can be understood as a TQFT with a dual $G^*$ global symmetry (in the sense of Section \ref{sec2}). Finally, in Appendix \ref{secA1} we collect various details about the homology and cohomology groups that are used in the main text.

\section{TQFT with a global \texorpdfstring{$q$}{q}-form symmetry}\label{sec2}

As mentioned in the introduction we consider background gauge fields for a $q$-form symmetry $G$, where $G$ is a finite abelian group, as classes of the relative cohomology $\coh[q+1]{M}{\hdm}$ where $\hdm = \dm \setminus \sk$ and $\sk$ is a $(d-q-2)$-skeleton in the (not necessarily connected) boundary $\dm$. Unless explicitly indicated otherwise, the coefficients of the cohomology are assumed to be in $G$ throughout the paper. Also, $\hat{\Sigma}$ by default means complement of the $(d-q-2)$-skeleton in a $(d-1)$-manifold $\Sigma$. To define the skeleton we will assume a certain cellular decomposition of the boundary manifold, for example, a triangulation. In Section \ref{sec4}, for technical reasons, we will restrict ourselves to skeletons  satisfying a certain property (a sufficient condition for this property is that the skeleton is defined by a triangulation or a dual polyhedral decomposition). We will argue that the ``physical information'' is actually independent of the choice of skeleton (and, therefore, cell decomposition).

In this section we will define a TQFT in the style of \cite{aty}, but with global $q$-form symmetry $G$ and as a functor from the bordism category of manifolds decorated by cohomology classes to the category of vector spaces. However, we will have first to check that such refined bordism category is well-defined.

    \subsection{Bordism category of manifolds decorated by cohomology classes}\label{sub2.1}

    In this section, we provide a definition of the \textit{bordism category of manifolds decorated by cohomology classes}
    \begin{equation}
        \dcg
    \end{equation}
        which captures the essence of $q$-form symmetries in the context of bordisms. We explore briefly all the key components of this category: objects, morphisms, monoidal structure, and braiding, shedding light on their significance and properties.

    \paragraph{Objects} For the objects of this category, we consider pairs $(\Si,b)$ of closed oriented smooth $(d-1)$-dimensional manifolds $\Si$ with a choice of $(d-q-2)$-skeleton $\sk\subset \Si$ and a background field  $b\in \coh[q+1]{\Si}{\hSi}$, where $\hat{\Si}\equiv \Si\setminus \sk$ is the complement of the skeleton.  In Appendix \ref{secA1} we collect various properties of the relative homology and cohomology groups involving $\Sigma$ and $\sk{}$ that we will use throughout the paper.

    \paragraph{Morphisms} 
    Morphisms $(\Sigma_-,b_-)\rightarrow (\Sigma_+,b_+)$ are pairs $(M, B)$, where $M$ is a morphism $\Sigma_-\rightarrow \Sigma_+$ in the category of ordinary smooth oriented bordisms (equipped with the embeddings $\Sigma_\pm \stackrel{i_\pm}{\hookrightarrow} \partial M\subset M$), and $B$ is an element in the cohomology group $\coh[q+1]{M}{\hdm}$ subject to the following condition. One can consider the inclusions $i_\pm$ as the maps between pairs of topological spaces: $i_\pm:\,(\Sigma_\pm, \hSi_\pm) \rightarrow (M, \hdm)$. They induce the pullbacks $i_\pm^*: \coh[q+1]{M}{\hdm} \rightarrow \coh[q+1]{\Sigma_\pm}{\hSi_\pm}$. We require that $i_\pm^*(B) = b_\pm$. Physically this means that the fields $b_\pm$ on the boundaries $\Sigma_\pm$ are the restrictions of the field $B$ in the bulk manifold $M$.
    To fully characterize these morphisms, we also adjust the usual bordism equivalence relation incorporating the fields. We identify $(M, B) \sim (M', B')$ when there exists a diffeomorphism $\psi: M \rightarrow M'$, identical on the boundary, that induces an isomorphism of the cohomology groups $\coh[q+1]{M}{\hdm} \cong \coh[q+1]{M'}{\hdm'}$ such that $\psi^*(B') = B$.

\begin{figure}
    \centering
    \includegraphics[width=0.6\textwidth]{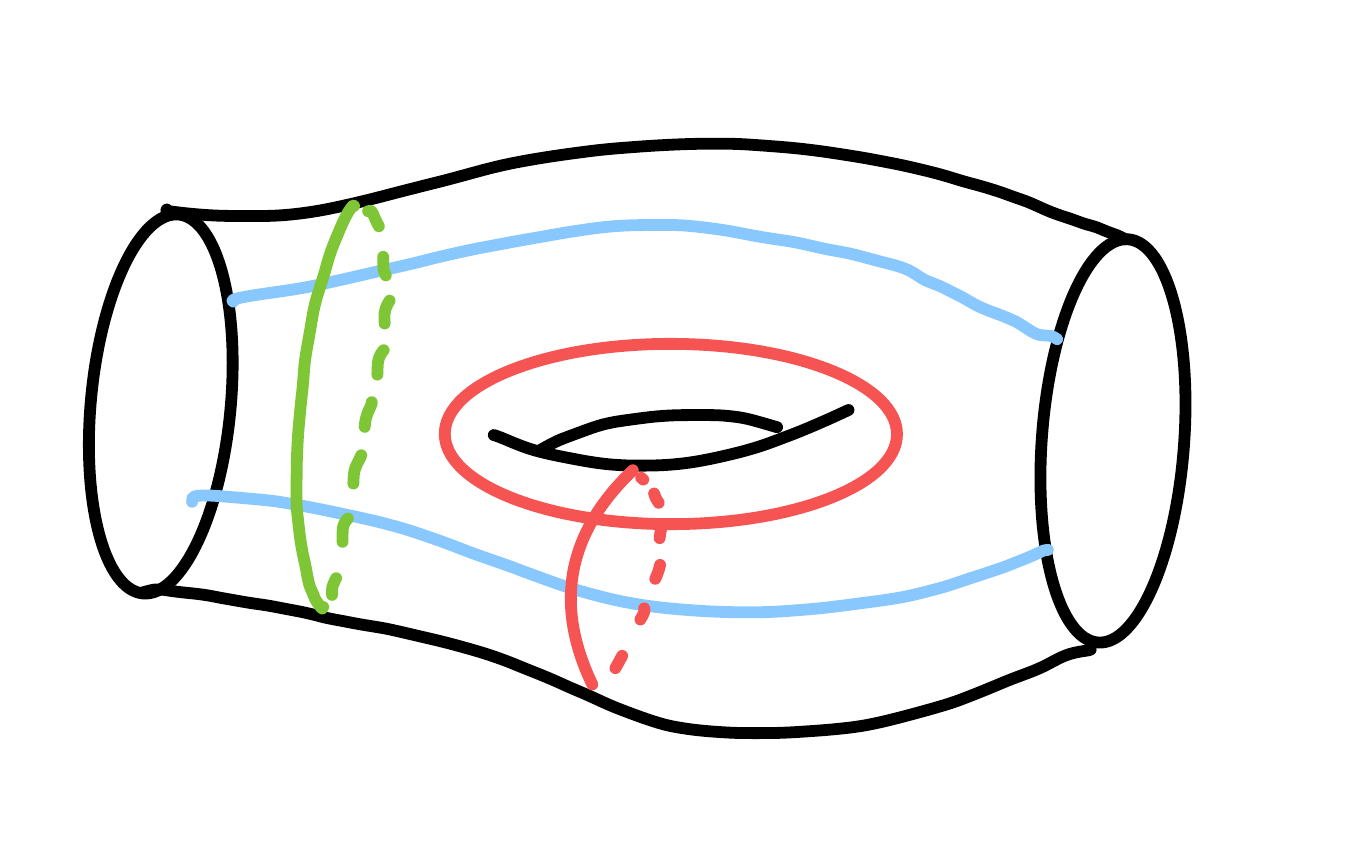}
    \caption{An example of bordism $(M,B):(\Sigma_-,b_-)\rightarrow (\Sigma_+,b_+)$ in $\mathbf{Bord}_{2}^{0,G}$. $M$ is a surface which provides a bordism between two circles $\Sigma_{\pm}\cong S^1$. Here, for each circle its $0$-skeleton $\sk_\pm$ consists of two points. The colored lines are defects representing the class $B\in H^{1}(M,\hat{\partial M})\cong H_{1}(M,\sk_+\sqcup \sk_-)$. The cyan lines end on $\sk_\pm\subset \Sigma_\pm$ while red and green lines lie completely in the bulk manifold $M$. Their boundaries represent classes $b_\pm\in H^1(\Sigma_\pm,\hat{\Sigma}_\pm)\cong H_0(\sk_\pm)$.  }
    \label{fig:1defrel}
\end{figure}

    Geometrically such decorated bordisms can be understood as follows. By Poincaré-Alexander-Lefschetz duality, we have $\coh[q+1]{\Si_\pm}{\hSi_\pm}\cong H_{d-q-2}(\sk_\pm)$ and $H^{q+1}(M,\hat{\partial M})\cong H_{d-q-1}(M,\sk_+\sqcup \sk_-)$. In terms of the dual homology group, the class $B$ is represented by a cycle in $M$ relative to $\sk_\pm$. It can be understood as a $(d-q-1)$-dimensional defect in $M$ possibly ending on the skeletons $\sk_\pm\subset \partial M$. Its boundary is a $(d-q-2)$-dimensional chain in $\sk_\pm\subset \Sigma_{\pm}$ representing $b_\pm$. An example with $d=2$ and $q=0$ can be seen in Figure \ref{fig:1defrel}. Note that there is a surjective homomorphism $H^{q+1}(\Sigma,\hat{\Sigma})\cong H_{d-q-2}(\sk)\rightarrow H^{q+1}(\Sigma)\cong H_{d-q-2}(\Sigma)$, so there is ''no loss of physical information'' by restricting the defects on the boundary to the skeleton.

    Next, we need to define the \textit{composition} of the morphisms. Let $(M_-,B_-):(\Sigma_-,b_-)\rightarrow (\sigma,b)$ and   $(M_+,B_+):(\sigma,b)\rightarrow (\Sigma_+,b_+)$. We then define the composition
     \begin{equation}
        (M,B) = (M_+,B_+)\circ(M_-,B_-):\;(\Sigma_-,b_-)\rightarrow (\Sigma_+,b_+)
    \end{equation}
    as follows. For the first entry $M$, the composition follows the standard composition of bordisms: $M=M_+\circ M_-$, i.e. by gluing them along $\sigma$. For the second entry, we must clarify the relationship between the field configuration $B \in \coh[q+1]{M}{\hdm}$ and the combined field configurations $B_- \oplus B_+ \in \coh[q+1]{M_-}{\hsi_-} \oplus \coh[q+1]{M_+}{\hsi_+}$ where $\sigma_\pm = \Sigma_\pm \sqcup \sigma$. To understand this, we utilize the relative version of the Mayer-Vietoris sequence decomposing $(M, \Si)$ into $(M_-, \sigma_-)$ and $(M_+, \sigma_+)$. It is important to note that the intersection of spaces leads to $(\sigma, \hsi)$, which yields non-trivial cohomology groups, because we have $\coh[k]{\sigma}{\hsi} = 0$ only up to $k=q$. The details of the calculations can be again found in Appendix \ref{secA1}. Thanks to the gluing condition asking for the fields to be identical on the cut $\sigma$ we have a unique identification between $\overline{B} \in \coh[q+1]{M}{\hSi}$, where $\Sigma:=\Sigma_+\sqcup \sigma \sqcup \Sigma_-$ and $(B_+,B_-) \in \coh[q+1]{M_+}{\hsi_+} \bigoplus \coh[q+1]{M_}{\hsi_-}$. 
    What we are left to do is understand how to relate $\overline{B} \in \coh[q+1]{M}{\hSi}$ to our physical configuration $B \in \coh[q+1]{M}{\hdm}$. In Appendix \ref{secA1} we show that there is a surjective map
    \begin{equation}
        \eta : {\coh[q+1]{M}{\hSi}} \longrightarrow  \coh[q+1]{M}{\hdm}.
    \end{equation}
    We then take $B=\eta(\overline{B})$. 
    The composition is associative since it is associative for bordisms, and the direct sum of groups is also associative. Geometrically the composition corresponds to gluing the bordisms with defects along the common boundary $\sigma$ and then ``unrestricting'' the defects from the skeleton in the cut $\sigma$.

    Next we need to define the \textit{identity} morphism $(\Sigma,b)\rightarrow (\Sigma,b)$, where both copies of $\Sigma$ are assumed to have the same chosen $(d-q-2)$-skeleton $\sk\subset \Sigma$. It consists of two components: the identity morphism $\Si \times I$ (where $I:=[0,1]$ is an interval) in the usual bordism category and an element $B \in {\coh[q+1]{\Si \times I}{\hSi_+\sqcup \hSi_-}}$, where $\hat{\Sigma}_\pm$ are two copies of $\hat{\Sigma}$. To describe $B$ explicitly, we will study the group
    \begin{equation}
        \coh[k]{\Si \times I}{\hSi_+\sqcup \hSi_-} \cong \homo[d-k]{\Si \times I}{\sk_+\sqcup \sk_-}
    \end{equation}
    for $k \leq q+1$, where $\sk_\pm \subset \Sigma_\pm$ are two copies of the same skeleton $\sk\subset \Sigma$. To do so, we examine the long exact sequence for the triple
    \begin{alignat*}{2}
        {\sk}_+\sqcup {\sk}_- &\subset {\sk} \times I &&\subset \Si \times I.
    \end{alignat*}
    An example of identity morphism for $d=2$ and $q=0$ is depicted in Figure \ref{fig:IdTri}.
    \begin{figure}
        \centering
        \includegraphics[width=0.5\textwidth]{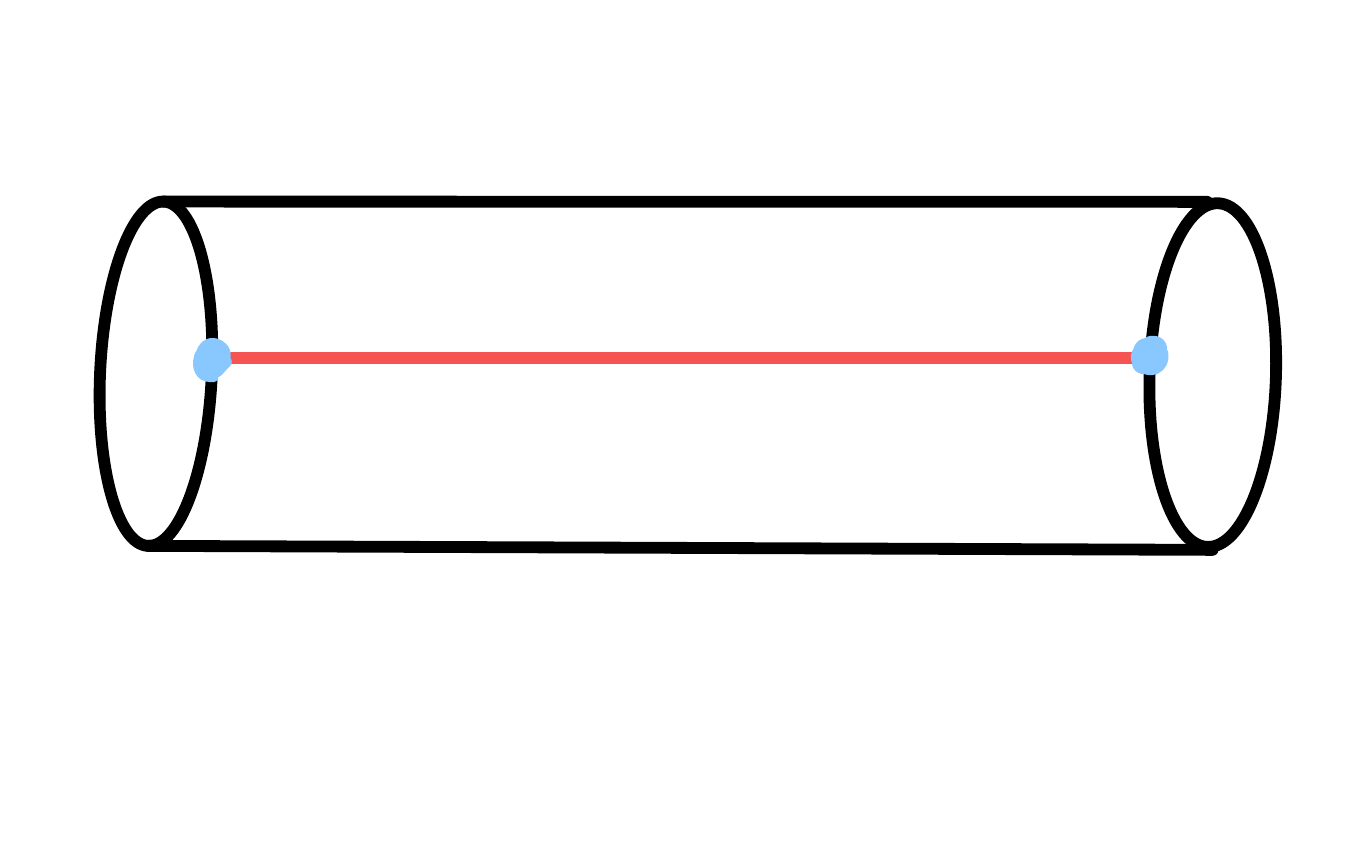}
        \caption{An example of a triple $({\sk}_+\sqcup \sk_- \subset \sk \times I \subset \Si \times I)$ for a 0 form symmetry in 2 dimensions. The cyan dots represent 1-point sets ${\sk}_\pm\subset \Sigma_\pm \cong S^1$, the red line represents ${\sk} \times I$, and the black cylinder represents $\Si \times I$. The red line also represents the ``identity'' field configuration preserving the boundary field configurations represented by cyan dots.}
        \label{fig:IdTri}
    \end{figure} 
    In Appendix \ref{secA1} we show all the details that allow us to see how the long exact sequence of this triple reduces to the short exact sequence
    \begin{center}
    \begin{tikzcd}
        0 \arrow[r]
        & \homo[d-q-1]{{\sk} \times I}{\sk_+\sqcup \sk_-} \arrow[r]
        & \homo[d-q-1]{\Si \times I}{{\sk}_+\sqcup \sk_-} \arrow[r] 
        & H_{d-q-1}(\Si,\sk) \arrow[r] 
        & 0.\\
    \end{tikzcd}
    \end{center}
    This sequence splits (in a canonical way, if one assumes a preference of one cylinder end to the other) allowing us to say that $ \coh[q+1]{\Si \times I}{\hSi_\pm} \cong  H^{q+1}(\Si,\hSi) \bigoplus H^{q}(\hSi)$ which, in terms of homology, is telling us that defects can be decomposed into a component "parallel" to the interval $I$ and a component "perpendicular" to it: $\homo[d-q-1]{\Si \times I}{{\sk}_+\sqcup \sk_-} \cong \homo[d-q-1]{{\sk} \times I}{{\sk}_+\sqcup \sk_-} \bigoplus H_{d-q-1}(\Si,\sk)$. As we will see later, physically, the perpendicular component represents the action of the $q$-form symmetry on the Hilbert space, while the parallel one determines in which twisted sector we are. In pictorial terms this splitting can be seen for 1-form symmetries in 3 dimensions in Figure \ref{fig:1Split3d}.
    \begin{figure}
        \centering
        \includegraphics[width=0.5\textwidth]{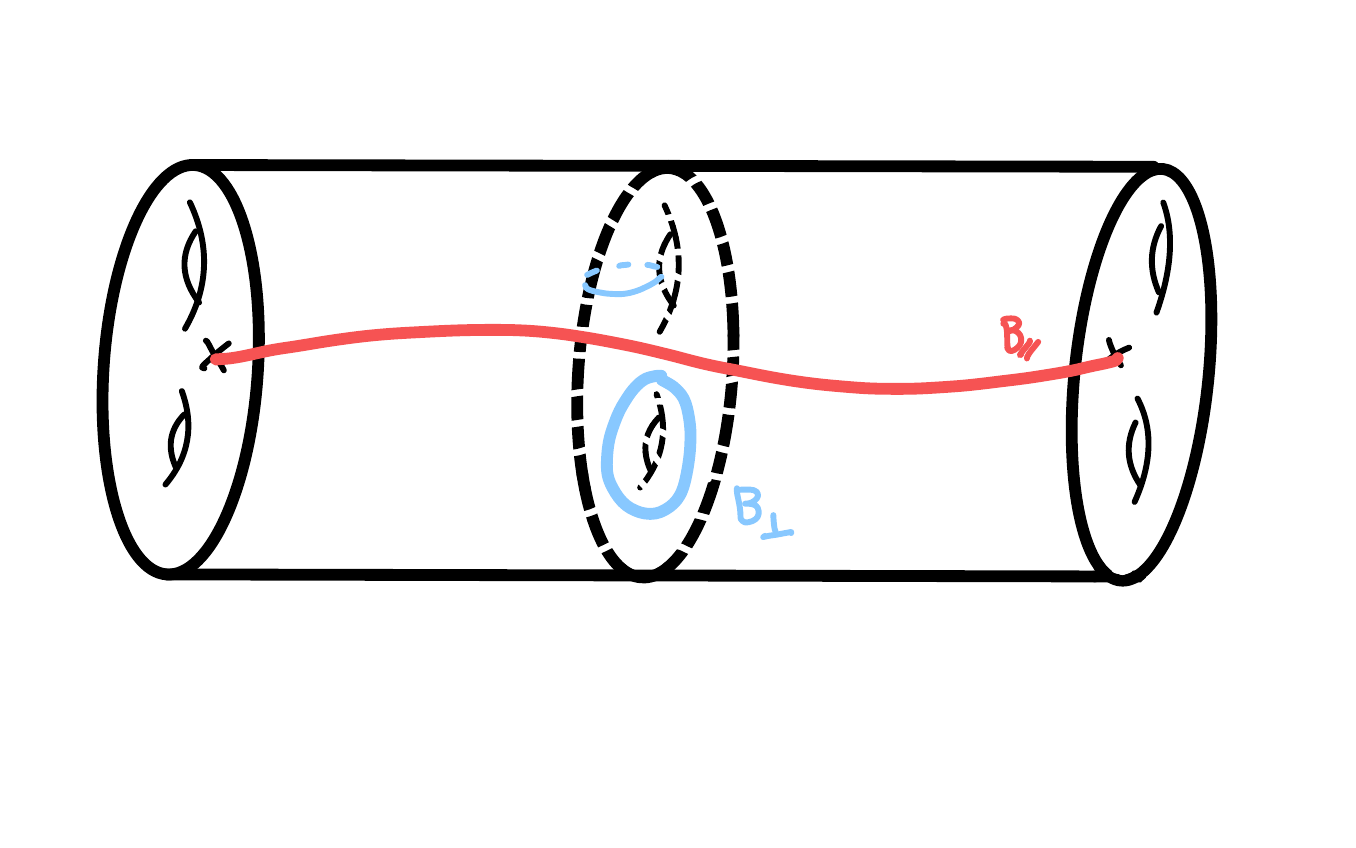}
        \caption{The case of a 1-form symmetry in 3 dimensions, which corresponds to 1-dimensional topological defects that can be decomposed into a parallel defect (red) and a perpendicular one (cyan).}
        \label{fig:1Split3d}
    \end{figure}     
    For the identity morphism, we just have to consider parallel defects i.e. $B = b \oplus 0$. Consequently, the identity morphism becomes $(\Si \times I, b \oplus 0)$ where clearly $b \oplus 0$ restricts to $b$ on both boundaries of the cylinder. The fact that it is indeed an identity morphism follows immediately from the explicit definition of the composition of morphisms above.
    In this way we have shown how this refined bordism category is a well-defined category. 
    
    Next, we show how it can be considered a symmetric monoidal category.
    For \textit{monoidality}, recalling the property of cohomology groups being split into direct sums when applied to disjoint unions, we can readily establish a satisfactory definition of the tensor product:
    \begin{equation}
        (\Sigma_1,b_1)\otimes (\Sigma_2,b_2) = (\Sigma_1 \sqcup \Sigma_2,b_1\oplus b_2).
    \end{equation}
    The identity object for the tensor product is $(\emptyset,0)$. The associativity of the tensor product follows from the associativity for both the entries of the pair. The construction of the tensor product for morphisms follows the same principles as for objects. 
    
    We then establish the \textit{braiding structure} through the cylinder construction. That is, we consider
\begin{equation}
   \beta_{(\Sigma_1,b_1)(\Sigma_2,b_2)} =((\Sigma_1 \sqcup \Sigma_2) \times I,(b_1 \oplus b_2)\oplus (0\oplus 0))
\end{equation}
as a morphism
\begin{equation}
   ( \Sigma_1 \sqcup \Sigma_2,b_1 \oplus b_2)\longrightarrow ( \Sigma_2 \sqcup \Sigma_1,b_2 \oplus b_1).
\end{equation}
    This construction satisfies the property required for a symmetric category: $\beta_{(\Sigma_1,b_1)(\Sigma_2,b_2)} = \beta_{(\Sigma_2,b_2)(\Sigma_1,b_1)}^{-1}$.\bigskip
    
    In this way, we have fully described $\dcg$ as a symmetric monoidal category. We now use this in our definition of TQFT with a global finite $q$-form symmetry.

    \subsection{TQFT with global symmetry}
    We can now proceed to discuss the TQFT as a functor on the refined bordism category $\dcg$:
    \begin{definition}
        A closed oriented TQFT $\mathcal{Z}$ with a global $q$-form symmetry $G$ is a symmetric monoidal functor from the bordism category of manifolds decorated by cohomology classes to the category of complex\footnote{For the purpose of defining a TQFT with a global symmetry, any field can be considered. However, for the purpose of defining the gauging procedure later we need to restrict ourselves to $\C$.} vector spaces:
        \begin{equation}
            \mathcal{Z}: \dcg \longrightarrow \bf{Vect}_{\mathbb{C}}.
        \end{equation}
    \end{definition} 
    We shall now proceed to shed light on the details of this functor. We denote the value of $\mathcal{Z}$ on objects $(\Sigma,b)$ as $\Hi(\Sigma,b):=\mathcal{Z}(\Sigma,b)$, and its value on morphisms $(M,B)$ as $Z(M,B):=\mathcal{Z}(M,B)$ to be coherent with the common physical notations. The vector space $\Hi(\Si,b)$ has a physical meaning of the Hilbert space on the $(d-1)$-dimensional spatial slice $\Si$ in the presence of the background gauge field $b\in \coh[q+1]{\Si}{\hSi}$. The value on the bordism $(M,B):(\Sigma_-,b_-)\rightarrow (\Sigma_+,b_+)$ has the meaning of the evolution operator $Z(M,B):\Hi(\Si_-,b_-)\rightarrow \Hi(\Si_+,b_+)$ in the presence of the background gauge field $B$ in the spacetime $M$ that restricts to $b_+$ and $b_-$ on the boundaries. When the manifold is closed, $Z(M,B)\in \C$ has the meaning of the partition function of the theory in the presence of the background gauge field $B\in H^q(M)$.

    The conditions on $\mathcal{Z}$ being functorial can be explicitly stated as follows: 
    \begin{align}
        Z&(M,B) = Z(M,\eta(\overline{B}))=Z(M_+,B_+)\circ Z(M_-,B_-),\\
        Z&(\Si \times I, b \oplus 0) = \text{Id}_{\Hi(\Si,b)}.
        \label{ungauged-functoriality}
    \end{align}
    That is $\mathcal{Z}$ preserves composition and identity.
Physically composition corresponds to the composition of the evolution operators between the Hilbert spaces at different spatial slices. 
    
    The condition of being monoidal in particular implies $\Hi(\Sigma_1,b_1)\otimes \Hi(\Sigma_2,b_2) = \Hi(\Sigma_1 \sqcup \Sigma_2, b_1 \oplus b_2)$, which physically corresponds to the fact that Hilbert space of two non-interacting systems is the tensor product of the respective Hilbert spaces.
    The condition of being symmetric means that it preserves the \textit{braiding structure}: $Z(\beta_{(\Sigma_1,b_1),(\Sigma_2,b_2)}) = \beta_{\Hi(\Sigma_1,b_1),\Hi(\Sigma_2,b_2)}$, where $\beta$ for the vector spaces is given by the permutation in the tensor product.

 Note that physically we expect the Hilbert space $\Hi(\Sigma,b)$ to depend on the background field valued in $H^{q+1}(\Sigma)$, not in $\coh[q+1]{\Si}{\hSi}$, which is, in general, a larger group, in the sense that there is a non-trivial surjective map $g:\coh[q+1]{\Si}{\hSi}\rightarrow H^{q+1}(\Si)$ (see Appendix \ref{secA1}). However, as we will see below, all $\Hi(\Sigma,b)$ for different choices of the skeleton $\sk\subset \Sigma$ and $b$, such that $g(b)\in H^{q+1}(\Sigma)$ is fixed, are isomorphic to each other. Namely, there is an isomorphism  $\Hi(\Si,b)\cong \Hi(\Si',b')$ if $g(b)=g'(b')$, where $\Si'$ is a copy of $\Si$ with a possibly different skeleton $\sk'$ and $g'$ is the analogue of $g$ for $\Sigma'$: $g':H^{q+1}(\Sigma,\hat{\Sigma}')\rightarrow H^{q+1}(\Sigma)$. The isomorphism is provided by the value of the TQFT on the cylinder $\Sigma\times I$ with the respective skeletons at different ends and a field $B$ that restricts to $b$ and $b'$ at those ends: 
\begin{proposition} 
    There exists an element $B \in H^{q+1}(\Si \times I,\hat\Sigma \sqcup \hat\Sigma')$ such that $i_-^*(B)=b$ and $i_+^*(B)=b'$, where $i_\pm$ are the identifications $i_-:\Sigma\rightarrow \Sigma\times \{0\}$ and $i_+:\Sigma'\rightarrow \Sigma\times \{1\}$ if and only if $g(b)=g'(b')$ (where $g$ and $g'$ are as defined above).
    \label{prop:skeleton-change}
\end{proposition}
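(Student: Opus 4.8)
The plan is to recast the statement as a computation of the image of the joint restriction map
\begin{equation}
(i_-^*,i_+^*):\ \coh[q+1]{\Si\times I}{\hSi\sqcup\hSi'}\longrightarrow \coh[q+1]{\Si}{\hSi}\oplus \coh[q+1]{\Si'}{\hSi'},
\end{equation}
and to show that this image is exactly $\{(b,b'):g(b)=g'(b')\}$. Throughout I set $X=\Si\times I$ and $A=\hSi\sqcup\hSi'$ (the two complements sitting in the two ends), write $\rho:\coh[q+1]{X}{A}\to H^{q+1}(X)$ for the map forgetting the relative structure and $\delta:H^{q}(A)\to\coh[q+1]{X}{A}$ for the connecting map of the pair $(X,A)$, and use the analogous maps $g,\delta_\Si$ and $g',\delta_{\Si'}$ of the two end pairs. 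Since the two ends $\Si\times\{0\}$ and $\Si\times\{1\}$ are homotopic inside $X$, the two restrictions $H^{q+1}(X)\to H^{q+1}(\Si)$ agree and are isomorphisms, and I identify $H^{q+1}(X)$ with $H^{q+1}(\Si)$ along them once and for all.

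The backbone of the argument is naturality. The commuting square of pairs obtained from the inclusions $(\Si,\emptyset)\hookrightarrow(\Si,\hSi)$, $(\Si,\emptyset)\hookrightarrow(X,\emptyset)$, $(\Si,\hSi)\hookrightarrow(X,A)$ and $(X,\emptyset)\hookrightarrow(X,A)$ gives, under the identification above, $g\circ i_-^*=\rho$ and $g'\circ i_+^*=\rho$. This already yields the ``only if'' direction: a common preimage $B$ forces $g(b)=\rho(B)=g'(b')$. The second fact I need is naturality of the connecting homomorphism for the inclusions of the two end pairs into $(X,A)$. Because $A$ is a disjoint union, the restriction $H^{q}(A)=H^{q}(\hSi)\oplus H^{q}(\hSi')\to H^{q}(\hSi)$ is projection onto the first summand, so $i_-^*\circ\delta=\delta_\Si\circ\mathrm{pr}_-$ and, symmetrically, $i_+^*\circ\delta=\delta_{\Si'}\circ\mathrm{pr}_+$. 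This ``diagonal'' behaviour of $\delta$ is what lets me fix the two ends independently, and is where the possibility of genuinely different skeletons at the two ends enters.

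For the ``if'' direction, given $c:=g(b)=g'(b')$, I would first lift $c$ to a relative class and then correct it. Since $c$ lies in the image of $g$ and of $g'$, and $\mathrm{Im}\,g=\ker\!\left(H^{q+1}(\Si)\to H^{q+1}(\hSi)\right)$ by the long exact sequence of the end (and likewise for $g'$), the class $c$ restricts to zero on both $\hSi$ and $\hSi'$; hence its image $\widetilde c\in H^{q+1}(X)$ restricts to zero on $A$ and therefore lies in $\mathrm{Im}\,\rho$. Choose $B_1$ with $\rho(B_1)=\widetilde c$. Then $g(i_-^*B_1)=c=g(b)$, so $b-i_-^*B_1\in\ker g=\mathrm{Im}\,\delta_\Si$, say $b-i_-^*B_1=\delta_\Si(a_-)$, and similarly $b'-i_+^*B_1=\delta_{\Si'}(a_+)$. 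Setting $B:=B_1+\delta(a_-,a_+)$ and invoking the diagonal naturality of $\delta$ gives $i_-^*B=b$ and $i_+^*B=b'$, as required.

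The one step that needs care, and which I regard as the crux, is the simultaneous correction of both ends: a priori the two boundary corrections could interfere. What saves the argument is the splitting $H^{q}(A)=H^{q}(\hSi)\oplus H^{q}(\hSi')$ together with the diagonal form of $\delta$, which decouples the ends and allows $a_-$ and $a_+$ to be chosen independently; this is precisely the feature that makes the change-of-skeleton cylinder work even when $\sk\neq\sk'$. The remaining points—commutativity of the naturality squares, the equality of the two end-restrictions, and the fact that $c$ restricts trivially to the complements—are routine consequences of the long exact sequences of the three pairs, which are recorded in Appendix \ref{secA1}.
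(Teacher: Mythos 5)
Your proof is correct, but it takes a genuinely different route from the paper's. The paper dualizes everything via Poincar\'e--Alexander--Lefschetz duality and reads the statement off a \emph{single} long exact sequence, namely the homology sequence of the pair $(\Sigma\times I,\sk\sqcup\sk')$: after identifying the connecting homomorphism with $(-i_-^*)\oplus i_+^*$ and the inclusion-induced map $H_{d-q-2}(\sk)\oplus H_{d-q-2}(\sk')\to H_{d-q-2}(\Sigma\times I)$ with $g+g'$, the proposition is exactness at that term. You stay entirely in cohomology and combine three exact sequences --- that of the pair $(X,A)=(\Sigma\times I,\hat\Sigma\sqcup\hat\Sigma')$ and those of the two end pairs --- glued by naturality: the identities $g\circ i_-^*=\rho$ and $g'\circ i_+^*=\rho$ give the ``only if'' direction, and the lift-and-correct step $B=B_1+\delta(a_-,a_+)$, using that $\delta$ acts diagonally because $A$ is a disjoint union, gives the ``if'' direction. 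What your version buys: it avoids the duality identifications of maps, which are precisely the steps the paper asserts without detailed justification (matching a homological connecting map with $(-i_-^*)\oplus i_+^*$, and $f$ with $g+g'$, under duality); it never uses the dimension of the skeleton; and it would work verbatim for any pair of ``end subspaces'', not just complements of skeletons. What the paper's version buys: it is shorter, and the same exact sequence simultaneously delivers the uniqueness statement invoked right after the proposition --- that two admissible choices of $B$ differ by an element of the image of the injection $H^q(\Sigma)\to H^{q+1}(\Sigma\times I,\hat\Sigma\sqcup\hat\Sigma')$ --- which in your framework requires a further argument (for instance the exact sequence of the triple $(X,\partial X,A)$, whose restriction map $H^{q+1}(X,A)\to H^{q+1}(\partial X,A)$ is your joint restriction $(i_-^*,i_+^*)$).
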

\begin{proof}
Recall that by Poincar\'e-Alexander-Lefschetz duality we have $H^{k}(\Si \times I,\hat\Sigma \sqcup \hat\Sigma')\cong H_{d-k}(\Sigma\times I,\sk\sqcup\sk')$, $H^k(\Sigma,\hat{\Sigma})\cong H_{d-k}(\sk)$, and $H^k(\Sigma',\hat{\Sigma}')\cong H_{d-k}(\sk')$.
 Let us look at the long exact sequence of homology groups for the pair $({\sk} \sqcup {\sk}',\Si \times I)$:

\begin{equation}
\begin{tikzcd}
 \ldots \ar[r] &  H_{d-q-1}(\sk)\oplus H_{d-q-1} (\sk') \ar[r] &  H_{d-q-1}(\Sigma\times I) \ar[r] &   H_{d-q-1}(\Sigma\times I,\sk\sqcup \sk') \ar[r,"\partial"] &  {}  \\
& 0 \ar[u,equal] &   H^{q}(\Sigma) \ar[u,equal] &  H^{q+1}(\Si \times I,\hat\Sigma \sqcup \hat\Sigma') \ar[u,equal]   &  
\\
 \ar[r] &  H_{d-q-2}(\sk)\oplus H_{d-q-2} (\sk')\ar[r,"f"] & H_{d-q-2}(\Sigma\times I) \ar[r] &   H_{d-q-2}(\Sigma\times I,\sk\sqcup \sk')  \ar[r] & \ldots  \\
& H^{q+1}(\Sigma,\hat{\Sigma})\oplus  H^{q+1}(\Sigma',\hat{\Sigma}') \ar[u,equal] &   H^{q+1}(\Sigma) \ar[u,equal] &      &  
\end{tikzcd}
\end{equation}
First note that in terms of cohomology groups, the connecting homomorphism is given by $\partial=(-i_-^*)\oplus i_+^*$. The two components of the map $f$, in terms on homology groups, are induced by inclusions $\sk,\sk'\hookrightarrow \Sigma \times I$, which then can be composed with the isomorphism induced by the deformation retraction $\Sigma\times I\rightarrow \Sigma$. On the other hand, the maps $g,g'$, under Poincar\'e-Lefschetz duality can be identified with the maps $H_{d-q-2}(\sk)\rightarrow H_{d-q-2}(\Sigma)$ and $H_{d-q-2}(\sk')\rightarrow H_{d-q-2}(\Sigma')$ induced by inclusions of the skeletons directly into $\Sigma$ and $\Sigma'$. From this it follows that in terms of dual cohomology groups we can identify $f=g+g'$. The statement of the proposition then follows from the exactness of the sequence.
\end{proof}
The considered cylinder $(\Sigma\times I,B)$ is an invertible morphism in our decorated category, as it can be composed with its copy with reversed orientation to provide the identity morphism considered earlier. The value of the TQFT on it then provides an isomorphism:
\begin{equation}
    Z(\Sigma\times I,B):\Hi(\Si,b) \stackrel{\cong}{\longrightarrow} \Hi(\Si',b').
    \label{iso-global-tqft}
\end{equation}
It follows that $\Hi(\Sigma,b)$, up to an isomorphism only depends on $g(b)$, but not on $b$ itself or the choice of the $(d-q-2)$-skeleton $\sk\subset \Sigma$ (and therefore also on the cell-decomposition of $\Sigma$ used to define the skeleton). The isomorphism $\Hi(\Si',b) \cong \Hi(\Si',b')$, however, is not canonical in general, as there might be different $B$ that restrict to $b$ and $b'$ at the ends of the cylinder $\Sigma\times I$. As can be seen from the long exact sequence in the proof of Proposition \ref{prop:skeleton-change}, any two such choices must differ by an element in the image of the injective map $H^q(\Sigma)\rightarrow H^{q+1}(\Sigma\times I,\hat\Sigma\sqcup \hat\Sigma')$.

    From the definition of the TQFT it follows that the vector spaces $\Hi(\Sigma,b)$ naturally form a representation $\rho_{\Sigma,b}$ of the group $H^q(\Sigma)\cong H_{d-q-1}(\Sigma)$. The action of the group is provided by the value of the TQFT on the cylinder $\Sigma\times I$ with the field $B\in \coh[q+1]{\Si \times I}{\hSi_\pm} \cong  H^{q+1}(\Si,\hSi) \bigoplus H^{q}(\hSi)$ of the form $B=b\oplus B_\perp$. In order for the cylinder to provide an endomorphism of $(\Sigma,b)$, $B_\perp$ must be inside the image of the injective map $\iota:H^q(\Sigma)\rightarrow H^q(\hat{\Sigma})$ (see Appendix \ref{secA1}).  That is, we have
    \begin{equation}
        \rho_{\Sigma,b}(\beta)=Z(\Sigma\times I,b\oplus \iota(\beta)):\Hi(\Sigma,b)\rightarrow \Hi(\Sigma,b),
    \qquad
    \beta\in  H^{q}(\Sigma)\cong H_{d-q-1}(\Sigma).
    \end{equation}
Physically the action is given by the $(d-q-1)$-dimensional defects parallel to the spatial slice $\Sigma$. 

The cylinder providing the isomorphism (\ref{iso-global-tqft}) can be composed with the cylinder providing the action of $\beta\in H^q(\Sigma)$ on $\mathcal{H}(\Sigma,b)$ or $\mathcal{H}(\Sigma,b')$. The result of the composition is again a cylinder of the form (\ref{iso-global-tqft}) but with $B$ shifted by the image of $\beta$ under the injection $H^q(\Sigma)\rightarrow H^{q+1}(\Sigma\times I,\hat\Sigma\sqcup \hat\Sigma')$. It then follows that the isomorphism (\ref{iso-global-tqft}) is equivariant with respect to the action of $H^q(\Sigma)$, and, moreover, that any two isomorphisms (\ref{iso-global-tqft}) differ by $\rho_{\Sigma,b}(\beta)$ or, equivalently $\rho_{\Sigma',b'}(\beta)$ for some $\beta\in H^q(\Sigma)$. This will be important for the construction of the gauged TQFT functor. In Section \ref{sec4} we will argue that for the Hilbert spaces with the same skeleton, there is a canonical isomorphism $\Hi(\Sigma,b)\cong \Hi(\Sigma,b')$ for $g(b)=g(b')$, defined by considering the direct sum decomposition into irreducible representations.

\section{Gauging a \texorpdfstring{$q$}{q}-form symmetry}\label{sec3}

We are now interested in gauging the $q$-form symmetry $G$. In this section, starting from a functor $\mathcal{Z}:\mathbf{Bord}_{d}^{q,G}\rightarrow  \mathbf{Vect}_\C$ describing a TQFT with a global $q$-form symmetry we will construct a new ordinary TQFT functor $\mathcal{Z}_\text{g}:\mathbf{Bord}_{d}\rightarrow  \mathbf{Vect}_\C$ defined on the usual, undecorated, bordism category.  To do so, will first give the ``naive'' values of this functor on the objects and morphisms, which we will then adjust to provide the correct ones that satisfy the functoriality properties. In Section \ref{sec4} we will show how to extend this functor to a functor $\mathbf{Bord}_{d}^{d-q-2,G^*}\rightarrow  \mathbf{Vect}_\C$ where $G^*:=\Hom(G,\R/\Z)$. That will provide a formal description of dual global symmetry, see \cite{Gaiotto,Bhardwaj:2017xup}, in the setting considered in this paper.

The basic idea is to ``sum over cohomology classes'' decorating the manifolds in the bordism category. Physically, we want to sum only over gauge inequivalent field configurations. As was mentioned previously, for a $(d-1)$-manifold $\Sigma$ configurations $b\in H^{q+1}(\Sigma,\hat{\Sigma})$ with the same value of $g(b)\in H^{q+1}(\Sigma)$ should be considered equivalent. Naively, then one can then define the Hilbert spaces of the gauged TQFT as follows:
\begin{equation}
\label{eq:naive}
    \Hi^\text{n}_\text{g}(\Si) := \bigoplus_{\tilde{b} \in H^{q+1}(\Si)} \Hi(\Si,b)
\end{equation}
where, in each term, $b$ is some element of $H^q(\Sigma)$ such that $g(b)=\tilde{b}$. Note that to define  $\Hi^\text{n}_\text{g}(\Si)$ here for each $\Sigma$ we need to choose a particular skeleton $\sk\subset \Sigma$, and moreover specify a choice of the preimage $b$ for each element of $H^q(\Sigma,\hat\Sigma)$. The second part more formally can be stated as a choice of the right-inverse function (not necessarily a homomorphism) to the surjection $g$. The dependence on such choices will be addressed later.

We then define the value $Z^\text{n}_\text{g}(M)$  on the bordism $M:\Sigma_-\rightarrow \Sigma_+$ by the following expression for its components:
\begin{equation}
   Z_{\text{g},b_+,b_-}^\text{n}(M) = c(M) \sum_{\substack{B \in \coh[q+1]{M}{\hdm} \\i_\pm^*(B)=b_\pm}} Z(M,B):\qquad \Hi(\Sigma_-,b_-)\rightarrow \Hi(\Sigma_+,b_+) 
   \label{naive-bordism-value}
\end{equation}
where $\Hi(\Sigma_\pm,b_\pm)$ is a component in the direct sum decomposition of $\Hi^\tn_\tg(\Sigma_\pm)$ and $c(M)$ is a coefficient that we will specify shortly and that will be needed in order for the formula to be consistent with the composition of bordisms. Consider now $(M,B)$ as a composition of two bordisms $(M_\pm,B_\pm)$ with the same conventions as in Section \ref{sec2} (see also Appendix \ref{secA1}). The boundary condition for the fields are specified as $i^*_\pm(B) =i^*_\pm(\overline{B})=i^*_\pm(B_\pm) = b_\pm$, $i^*(\overline{B})=b_\sigma$ and $i^*_{\pm,\sigma}(B_\pm)=b_{\sigma}$ for the inclusions $i_\pm: \Sigma_\pm \rightarrow M_\pm \subset M$, $i:\sigma\rightarrow M$, $i_{\pm,\sigma}:\sigma\rightarrow M_\pm$. We begin now from the definition of $Z_{\tg,b_+,b_-}^\tn(M)$ and will rewrite it in terms of the composition of the two partition functions $Z_{\tg,b_+,b_\sigma}^\tn(M_+)$ and $Z_{\tg,b_\sigma,b_-}^\tn(M_+)$: 
\begin{align*}
    Z_{\tg,b_+,b_-}^\tn(M) &= c(M) \sum_{\substack{B \in \coh[q+1]{M}{\hdm} \\ i_\pm^*(B)=b_\pm}} Z(M,B)\\
    &= c(M)\frac{1}{|\text{Ker}\eta|} \sum_{\substack{\overline{B} \in \coh[q+1]{M}{\hSi} \\ i^*_\pm(\overline{B})=b_\pm }} Z(M,\eta(\Bar{B}))\\
    &= c(M) \frac{1}{|\text{Ker}\eta|} \sum_{b_\sigma \in \coh[q+1]{\sigma}{\hat\sigma}} \sum_{\substack{B_\pm \in \coh[q+1]{M_\pm}{\hsi_i} \\ i^*_\pm(B_\pm)=b_\pm,\; i^*_{\pm,\sigma}(B_\pm)=b_\sigma }} Z(M_+,B_+)\circ Z(M_-,B_-) \\
    &= \frac{c(M)}{c(M_+)c(M_-)}\frac{|\text{Im}\Phi|}{|\text{Ker}\eta|} \sum_{{\tilde{b}}_\sigma\in H^{q+1}(\sigma)} Z_{\tg,b_{+},b_\sigma}^\tn(M_+)\circ Z_{\tg,b_{\sigma},b_-}^\tn(M_-)\\
    &= \sum_{{\tilde{b}}_\sigma\in H^{q+1}(\sigma)}  Z_{\tg,b_{+},b_\sigma}^\tn(M_+)\circ Z_{\tg,b_{\sigma},b_{-}}^\tn(M_-).
\end{align*}
The second equality is due to the fact that using $\coh[q+1]{M}{\hSi}$ instead of $\coh[q+1]{M}{\hdm}$ only yields an overcounting in the sum since $Z(M,\eta(\Bar{B}))=Z(M,\eta(\Bar{B} + c))$ for any $c \in  \text{Ker}\eta$. In the third equality, we used the functoriality (\ref{ungauged-functoriality}) and split the sum over $\overline{B}$ into the sum over $B_\pm$ with fixed boundary conditions and the sum over the intermediate field value $b_\sigma$ on the cut $\sigma$. 
Next step is to notice that the composition $Z_{\tg,b_{+,\sigma}}^\tn(M_+)\circ Z_{\tg,b_{\sigma,-}}^\tn(M_-)$ depends only on $B$. Therefore it depends on $b_\sigma$ only through its image $g_\sigma(b_\sigma)$, $g_\sigma:H^{q+1}(\sigma,\hat\sigma)\rightarrow H^{q+1}(\sigma)$, see Appendix \ref{secA1}. By looking again at the long exact sequence for the pair $(\sigma,\hat{\sigma})$ we notice $H^{q+1}(\sigma) = \bigslant{\coh[q+1]{\sigma}{\hat\sigma}}{\text{Im}\Phi}$ where $\Phi:H^{q}(\hat\sigma)\rightarrow H^{q+1}(\sigma,\hat\sigma)$ and therefore $\text{Im}\Phi = \frac{|H^q(\hat{\sigma})|}{|H^q(\sigma)|}$. Thus the sum over $H^{q+1}(\sigma,\hat\sigma)$ can be rewritten as the sum over $H^{q+1}(\sigma)$ times the size of the image of $\Phi$. By taking $b_\sigma$ to be a preimage of ${\tilde{b}}_\sigma$ according to the choice made in the definition of $\Hi_\tg^\tn(\sigma)$, one can then rewrite the expression using the definition of partition function (\ref{naive-bordism-value}) for $M_\pm$. This gives the fourth equality. By defining the coefficient $c(M)$ as\footnote{Here we make a choice which is symmetric under the exchange of the in- and out-boundaries $\Sigma_-$ and $\Sigma_+$. More generally, one could use
\begin{equation}
    c(M)=\prod_{i=0}^q \left( |\coh[i]{M}{\hdm} | |H^{i}(\Si_+) |^{s} |H^{i}(\Si_-) |^{1-s} \right )^{(-1)^{q-i+1}} 
\end{equation}
for some $s$ (the natural choices are $s=0,1,\frac{1}{2}$). The resulting functor would be the same up to a natural isomorphism.
} 
\begin{equation}
    c(M)=\prod_{i=0}^q \left( |\coh[i]{M}{\hdm} | |H^{i}(\Si_+) |^{\frac{1}{2}} |H^{i}(\Si_-) |^{\frac{1}{2}} \right )^{(-1)^{q-i+1}} 
 \end{equation}
we finally obtain the last equality showing that $Z^\tn_\tg(M)=Z^\tn_\tg(M_+)\circ Z^\tn_\tg(M_+)$ and therefore that he composition rule is preserved. That is, we have a commutative diagram
\begin{center}
    \begin{tikzcd}
        \Hi^\text{n}_\text{g}(\Si_-) \arrow[rr,"Z^\tn_\tg(M_-)"] \arrow[rrrr, bend right,"Z^\tn_\tg(M)"] & {} & \Hi^\text{n}_\text{g}(\sigma) \arrow[rr,"Z^\tn_\tg(M_+)"] & {} 
        & \Hi^\text{n}_\text{g}(\Si_+)
    \end{tikzcd}
\end{center}
In pictorial terms, this composition can be seen in Figure \ref{fig:Comp-1}

\begin{figure}
    \centering
    \includegraphics[width=0.5\textwidth]{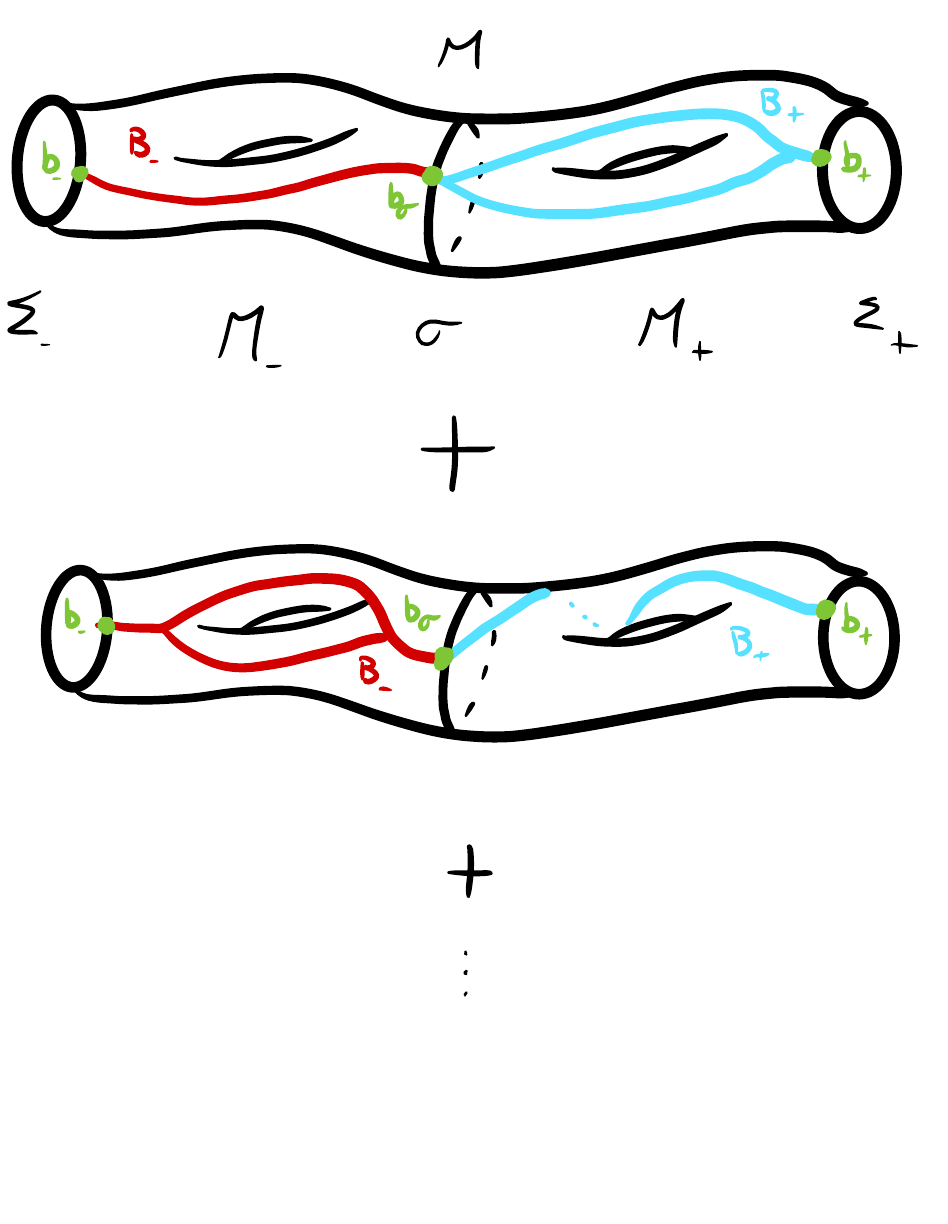}
    \caption{Composition of the gauged partition functions. We consider all possible values of fields $B_\pm$ inside $M_\pm$ and $b_\pm$  on the boundaries $\Si_\pm$ (modulo the equivalence relation defined by having the same image in $H^{q+1}(\Sigma_\pm)$). In this way, the composition of the full partition functions is acting on the full Hilbert space.}
    \label{fig:Comp-1}
\end{figure}

The naivety of this definition lies in fact that the identity morphism is not preserved. The partition function on the cylinder acts as a projector instead. In order to see this let us recall that $ \coh[q+1]{\Si \times I}{\hSi_+\sqcup \hSi_-} = H^{q+1}(\Si,\hSi) \bigoplus H^{q}(\hSi)$. The first component $B_\parallel \in H^{q+1}(\Si,\hSi)$ is completely fixed by the boundary conditions, equal on both ends: $B_\parallel = b$. The summation can be done only over the image of the injection $\iota:H^q(\Sigma)\rightarrow H^{q+1}(\hat\Sigma)$. The coefficient in this case simplifies to $c(\Si \times I) = \frac{1}{|H^{q}(\Sigma)|}$ and the partition function becomes 
\begin{equation}
    Z_{\tg,b,b}^\tn(\Si \times I) = \frac{1}{|H^{q}(\Sigma) |} \sum_{\beta \in H^{q}(\Si) } Z(\Si \times I,  b \oplus \iota(\beta) )=\frac{1}{|H^{q}(\Sigma) |} \sum_{\beta \in H^{q}(\Si) } \rho_{\Sigma,b}(\beta)
\end{equation}
where $\rho_{\Sigma,b}$ is the representation of $H^{q}(\Sigma)$ on $\Hi(\Sigma,b)$ considered in Section \ref{sec2}. By the orthogonality property of characters of irreducible representations, we get that this partition function acts as a projector on the trivial representation subspace:
\begin{equation}
    Z^\tn_{\tg,b,b}(\Si \times I) = P_{\Si,b}^0.
\end{equation}
This leads to the correct definition for our gauged TQFT. Indeed we want physical states to be trivial under the gauge group action and the way to achieve this is to project to the trivial representations. With this in mind, we can proceed to define the actual gauged TQFT.\\

\begin{definition}
The gauging of a TQFT $\mathcal{Z}:\Bord_{d}^{q,G}\rightarrow \mathbf{Vect}_\C$ with a finite global $q$-form symmetry $G$ is a closed oriented TQFT given by the symmetric monoidal functor
    \begin{equation}
        \mathcal{Z}_{\tg}: \mathbf{Bord}_d \longrightarrow \mathbf{Vect}_{\mathbb{C}}.
    \end{equation}
from the $d$-dimensional bordism category to the category of vector spaces with the values on objects $\Sigma$  and morphisms $M:\Sigma_-\rightarrow \Sigma_+$ given by 
\begin{equation}
    \Hi_\tg(\Si) = P_{\Sigma}^0(\Hi_\tg^\tn(\Si) ),
    \label{gauged-hilbert-space}
\end{equation} 
\begin{equation}
    Z_\tg(M) = P_{\Sigma_+}^0 Z_\tg^\tn(M)|_{\mathrm{Im}\,P^0_{\Sigma_-}}.
\end{equation}
\end{definition}

The \textit{objects} of the bordism category are sent to Hilbert space built as the sum over sectors of all field configurations and then projected to the trivial representations only. The \textit{partition function} is the same as the naive one (since one can always realize a bordism $M:\Sigma_-\rightarrow \Sigma_+$ as a composition of itself with the cylinder $\Sigma_\pm \times I$), but we made it explicit that it acts only inside the trivial representations. By construction, it then follows that the $Z_\tg$ now preserves both the composition (since we have already shown that for the naive gauged partition function $Z^\tn_\tg$) and the identity, so that we have:
\begin{align*}
    Z_\tg(M) &= Z_\tg(M_+)\circ Z_\tg(M_-), \\
    Z_\tg(\Si \times I) &= \text{Id}_{\Hi_\tg(\Si)}.
\end{align*}
In this way, we have described how to build a TQFT with finite gauge $q$-form symmetry starting from a theory with a global $q$-form symmetry. Note that for a closed manifold $M$ (which can be considered as a bordism from the empty space to the empty space), we recover the known expression for the partition function of the finite group $q$-form gauged theory:
\begin{equation}
    Z_\tg(M)=\prod_{i=0}^{q} |H^i(M)|^{(-1)^{q-i+1}}\sum_{B\in H^{q+1}(M)} Z(M,B).
    \label{gauged-closed}
\end{equation}

As was pointed out earlier, the definition of the gauged TQFT functor involves certain choices: for each object $\Sigma$ one has to choose a $(d-q-2)$-skeleton, and then for each $\tilde{b}\in H^q(\Sigma)$ one has to choose an element $b$ such that $g(b) = \Tilde{b}$. We will argue now that TQFT functors for different skeletons are naturally isomorphic. 

In Section \ref{sec2} we have shown that there is an isomorphism $Z(\Sigma\times I,B):\Hi(\Sigma,b)\rightarrow \Hi(\Sigma',b')$ where $g(b)=g'(b')$ and $\Sigma'$ is the same as $\Sigma$ when one forgets about the skeleton. Moreover, in general, there is an ambiguity of the choice of $B$ with no canonical choice. The isomorphism is equivariant with respect to the action of $H^q(\Sigma)$, so it descends to a well-defined isomorphism between the trivial representation subspaces. Moreover, for the isomorphism restricted to the trivial representation subspaces the ambiguity disappears, because, as was argued in Section \ref{sec2}, two different isomorphisms differ by an action of $H^q(\Sigma)$. Therefore it provides a \textit{canonical} isomorphisms $P^0_{\Sigma,b}(\Hi(\Sigma,b))\cong P^0_{\Sigma',b'}(\Hi(\Sigma',b'))$. Therefore it also provides a canonical isomorphism between $\Hi_\tg(\Sigma)$ for different choices of the skeleton and different choices of  $b$ such that $\tilde{b}=g(b)$. This isomorphism can be realized as $Z^\tn_\tg(\Sigma\times I)$ but with different choices of the skeleton at the cylinder ends. It then follows that such isomorphisms provide a natural isomorphism between TQFT functors which are defined using different choices. Therefore the construction of the gauged TQFT functor is unambiguous up to a natural isomorphism.

\section{Dual global symmetry}\label{sec4}

In this section, we will show how the TQFT with gauged $q$-form symmetry $G$ (described in Section \ref{sec3}) can be refined to a TQFT with $q^*:=(d-q-2)$-form global symmetry $G^*:=\Hom(G,\R/\Z)$ in the sense described in Section \ref{sec2}. It is well-known \cite{Gaiotto,Bhardwaj:2017xup} how to write  the partition function of such refined TQFT on a closed $d$-manifold $M$ decorated by the cohomology class $A\in H^{q^*+1}(M;G^*)$:
\begin{equation}
    Z_\text{g}(M,A)=\prod_{i=0}^{q} |H^i(M;G)|^{(-1)^{q-i+1}}\sum_{B\in H^{q+1}(M;G)}Z(M,B)\,e^{-2\pi i\int_{M} A\cup B}
    \label{refined-closed-manifold}
\end{equation}
which reduces to (\ref{gauged-closed}) when $A=0$. The minus sign in the exponential is chosen for later convenience. To extend this to a TQFT functor one first needs to define the version of the Hilbert space $\mathcal{H}_\tg(\Sigma,a)$ in a non-trivial background $a\in H^{q'+1}(\Sigma,\hat{\Sigma}^*;G^*)$ where $\hat{\Sigma}^*:=\Sigma \setminus \sk^*$ and $\sk^*$ is a $(d-q^*-2)=q$-skeleton in $\Sigma$. Consider $\sk$ to be a skeleton for a \textit{triangulation} of $\sk$. We then choose the skeleton $\sk^*$ to be the skeleton of the \textit{polyhedral cell complex dual} to that triangulation (as in the standard proof of the Poincer\'e duality). Alternatively, one can take $\sk^*$ to be defined by a triangulation, while $\sk$ to be defined by the dual polyhedral decomposition. Such skeletons satisfy the property that the complement of one deformation retracts to the other (see Figure \ref{fig:Triang-1}), which provides the following homotopy equivalence:
\begin{equation}
    \Sigma\setminus \sk \simeq \sk^*,\qquad \Sigma\setminus \sk^* \simeq \sk.
    \label{dual-skeletons}
\end{equation}
More generally, $\sk$ and $\sk^*$ can be a pair of skeletons defined by more general cellular decompositions of $\Sigma$ that satisfy this property. 

\begin{figure}
    \centering
    \includegraphics[width=0.5\textwidth]{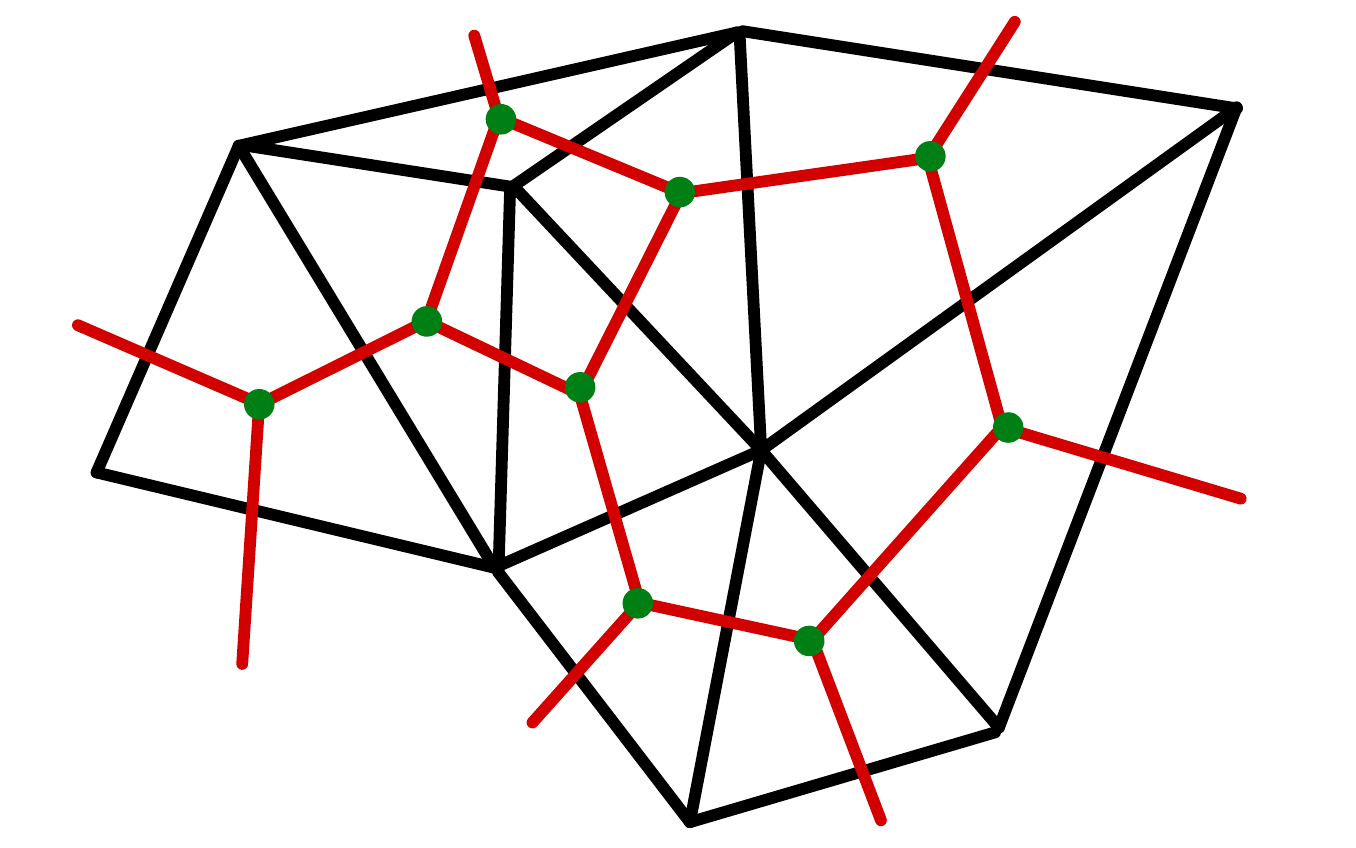}
    \caption{An example of dual skeletons for $d=3$, $q=0$ case. The $(d-q-2)=1$-dimensional skeleton corresponding to the depicted triangulation is shown in black. Its dual, $q=0$-dimensional skeleton is shown in green and corresponds to the dual polyhedral cell decomposition (shown in red). The complement of the $1$-skeleton deformation retracts to the $0$-skeleton and vice versa.}
    \label{fig:Triang-1}
\end{figure}

For the dual skeleton we have the surjective map (the analogue of $g:H^{q+1}(\Sigma,\hat\Sigma)\rightarrow H^{q+1}(\Sigma)$, \textit{not} its Pontryagin dual): 
\begin{equation}
g^*:H^{q^*+1}(\Sigma,\hat{\Sigma}^*;G^*)
\longrightarrow H^{q^*+1}(\Sigma;G^*)=H^{d-q-1}(\Sigma;G^*)\cong H_{q}(\Sigma;G^*)\cong (H^q(\Sigma;G))^*
\label{gstar}
\end{equation}
where the isomorphisms on the right are given by Poincar\'e duality and the general relation between homology and cohomology groups with Pontryagin dual coefficient groups \cite{hurewicz2015dimension}. We also have the isomorphisms
\begin{equation}
    H^{q^*+1}(\Sigma,\hat{\Sigma}^*;G^*)\cong H^{d-q-1}(\Sigma,\sk;G^*)\cong
    H_q(\hat\Sigma;G^*)\cong (H^q(\hat\Sigma;G))^*
\end{equation}
where the first one follows from the homotopy equivalence (\ref{dual-skeletons}) provided by the deformation retraction. The map $g^*$, up to the isomorphisms above, can be understood as the map Pontryagin dual to the previously considered injection $\iota:H^q(\Sigma;G)\rightarrow H^q(\hat\Sigma;G)$.

As was pointed out in Section \ref{sec3}, the vector spaces $\mathcal{H}(\Sigma,b),\;b\in H^{q+1}(\Sigma,\hat{\Sigma})$ of the ungauged theory form a representation $\rho_{\Sigma,b}$ of the abelian group $H^q({\Sigma};G)$.
The Pontryagin dual group $(H^q({\Sigma};G))^*$ classifies its irreducible representations (which are all 1-dimensional). One can then consider the following decomposition of $\mathcal{H}(\Sigma,b)$ into components consisting of different irreducible representations:
\begin{equation}
    \mathcal{H}(\Sigma,b) = \bigoplus_{\tilde{a}\in (H^q({\Sigma};G))^* }\mathcal{H}(\Sigma,b,\tilde{a})
\end{equation}
so that
\begin{equation}
    \left. \rho_b(\beta)\right|_{\Hi(\Sigma,b,\tilde{a})}=e^{2\pi i\tilde{a}(\beta)}\,\mathrm{Id}_{\Hi(\Sigma,b,\tilde{a})}, \qquad \beta\in H^q({\Sigma};G).
\end{equation}
 Consider $a\in H^{q^*+1}(\Sigma,\hat{\Sigma}^*;G^*)$. Using the map and the isomorphisms in (\ref{gstar}) we define the refined Hilbert spaces of the gauged theory as follows:
\begin{equation}
    \mathcal{H}_\text{g}(\Sigma,a)=\bigoplus_{\tilde{b}\in H^{q+1}(\Sigma)}\mathcal{H}(\Sigma,b,g^*(a)).
\end{equation}
For $a=0$ this recovers (\ref{gauged-hilbert-space}), the projection on the trivial representation subspace. Now let $(M,A)$ be a bordism from $(\Sigma_-,a_-)$ to $(\Sigma_+,a_+)$ in the category ${\mathbf{Bord}}_d^{q^*,G^*}$. To be precise, here we consider the version of the category where the objects are  $(d-1)$-manifolds equipped with a skeleton such that there exists its dual in the sense defined above. We have $A\in H^{q^*+1}(M,\hat{\partial M}^*)$ and $a_{\pm}=i_\pm^*(A)\in H^{q^*+1}(\Sigma_\pm,\hat{\Sigma}_\pm^*)$. We then define the "naive" refined version of the value of the gauged TQFT on the bordism as follows:
\begin{equation}
        Z_{\text{g},b_+,b_-}^\tn(M,A) = c(M) \sum_{\substack{B \in \coh[q+1]{M}{\hdm} \\ i^*_{\pm}(B)=b_\pm}} Z(M,B)\,e^{-2\pi i \int_{M}A\cup B}.
\end{equation}
This gives the map from $\Hi(\Sigma_-,b_-)$ to $\Hi(\Sigma_+,b_+)$. As the unrefined functor considered in Section \ref{sec3}, it respects the composition of bordisms, because of the additivity of the integral in the exponential. Note that the cup product $A\cup B$ gives a well-defined class in $H^d(M,\partial M;\R/\Z)$, which one can pair with the fundamental class $[M]\in H_{d}(M,\partial M;\Z)$, because $B\in H^{q+1}(M,\partial M\setminus (\sk_+\sqcup \sk_-);G)$ and $A\in H^{d-q-1}(M,\partial M\setminus  (\sk_+^*\sqcup \sk_-^*);G^*)\cong H^{d-q-1}(M,  \sk_+\sqcup \sk_-;G^*)$, where we used again the homotopy equivalence $\Sigma_{\pm}\setminus \sk_\pm^* \simeq {\sk}_\pm$. For $A=0$ this recovers (\ref{naive-bordism-value}) and for a closed manifold this recovers (\ref{refined-closed-manifold}). As in Section \ref{sec3}, consider the cylinder $\Sigma\times I$, but now decorated by an element $A\in H^{q^*+1}(\Sigma\times I,\hat{\Sigma}_+^*\sqcup ;G^*)\cong H^{q^*+1}(\Sigma,\hat{\Sigma}^*;G^*)\oplus H^{q^*}(\hat{\Sigma}^*;G^*)$ and take it of the form $A=a\oplus 0$. The value of $Z^\text{n}_\text{g}$ on this cylinder then can be identified with the projector on the component $\Hi(\Sigma,b,g^*(a))$ of $\Hi(\Sigma,b)$
This gives the map from $\Hi(\Sigma_-,b_-)$ to $\Hi(\Sigma_+,b_+)$. As the unrefined functor considered in Section \ref{sec3}, it respects the composition of bordisms, because of the additivity of the integral in the exponential. Note that the cup product $A\cup B$ gives a well-defined class in $H^d(M,\partial M;\R/\Z)$, which one can pair with the fundamental class $[M]\in H_{d}(M,\partial M;\Z)$, because $B\in {H^{q+1}(M,\partial M\setminus (\sk_+\sqcup \sk_-);G)}$ and $A\in {H^{d-q-1}(M,\partial M\setminus  (\sk_+^*\sqcup \sk_-^*);G^*)}\cong {H^{d-q-1}(M,  \sk_+\sqcup \sk_-;G^*)}$, where we used again the homotopy equivalence $\Sigma_{\pm}\setminus \sk_\pm^* \simeq {\sk}_\pm$ provided by the deformation retractions. For $A=0$ this recovers (\ref{naive-bordism-value}) and for a closed manifold this recovers (\ref{refined-closed-manifold}). As in Section \ref{sec3}, consider the cylinder $\Sigma\times I$, but now decorated by an element $A\in H^{q^*+1}(\Sigma\times I,\hat{\Sigma}_+^*\sqcup \hat{\Sigma}_-^* ;G^*)\cong H^{q^*+1}(\Sigma,\hat{\Sigma}^*;G^*)\oplus H^{q^*}(\hat{\Sigma}^*;G^*)$ and take it of the form $A=a\oplus 0$. The value of $Z^\text{n}_\text{g}$ on this cylinder then can be identified with the projector on the component $\Hi(\Sigma,b,g^*(a))$ of $\Hi(\Sigma,b)$
\begin{multline}
     Z^\tn_{\tg,b,b}(\Sigma\times I,a\oplus 0)=\frac{1}{|H^{q}(\Si) |} \sum_{\beta \in H^{q}(\Si) } Z(\Si \times I,  b \oplus \iota(\beta)) \,e^{-2\pi i\int (a\oplus 0)\cup (b\oplus \iota(\beta))}=\\
        \frac{1}{|H^{q}(\Si) |}\sum_{\beta \in H^{q}(\Si) }\rho_b(\beta)\,e^{-2\pi i\,g^*(a)(\beta)}  =: P^{g^*(a)}_{\Sigma,b}.
        \label{rep-projection-cylinder}
\end{multline}
By the same argument as in Section \ref{sec3} we can then define the value of the "true" TQFT as
\begin{equation}
    Z_{\tg,b_+,b_-}(M,A):=P_{\Sigma,b_+}^{g^*(a_+)}\left.Z^\tn_{\tg,b_+,b_-}(M,A)\right.|_{\Hi(\Sigma,b_-,g^*(a_-))},\qquad a_\pm=i_\pm^*(A)
\end{equation}
which gives the map from the component $\Hi(\Sigma,b_-,g^*(a_-))$ of $\Hi_\tg(\Sigma,a_-)$ to the component $\Hi(\Sigma,b_+,g^*(a_+))$ of $\Hi_\tg(\Sigma,a_+)$.

As in the unrefined case, the definition of $\Hi(\Sigma,a)$ relies on the choice choice of the preimages $b\in H^{q+1}(\Sigma,\hat{\Sigma})$ for each $\tilde{b}\in H^{q+1}(\Sigma)$ (now the skeleton $\sk$ is fixed by $\sk^*$ up to the choice of the deformation retraction, so the group is $H^{q+1}(\Sigma,\hat{\Sigma})$ fixed). One can again argue that different choices lead to naturally isomorphic functors. Similarly to the unrefined case, the natural isomorphism is provided by the isomorphisms
\begin{equation}
Z(\Sigma\times I,B)\,e^{-2\pi i\int_{\Sigma\times I}(a\oplus 0)\cup B}\;:\;\Hi(\Sigma,b,g^*(a))\stackrel{\cong}{\longrightarrow} \Hi(\Sigma,b',g^*(a)).
\end{equation}
They are independent of the choice of $B$ and thus canonical.

The vector space $\Hi_\tg(\Sigma,a)$ form a representation $\rho_{\Sigma,a}^*$ of $H^{q^*}(\Sigma)$, as previously, provided by 
\begin{equation}
    \rho^*_{\Sigma,a}(\alpha)=Z_\tg(\Sigma\times I,a\oplus \iota^*(\alpha))
\end{equation}
where $\iota^*:H^{q^*}(\Sigma)\rightarrow H^{q^*}(\hat\Sigma^*)$ is the analogue of the map $\iota$. The generalization (\ref{rep-projection-cylinder}) to the case of non-trivial $\alpha$ gives us:
\begin{multline}
    \rho_b(\alpha)|_{\Hi(\Sigma,b,g^*(a))}=Z_{\tg,b,b}(\Sigma\times I,a\oplus \iota^*(\alpha))=\\
    =e^{-2\pi i\int \iota^*(\alpha)\cup b }\,\mathrm{Id}_{\Hi(\Sigma,b,g^*(a))}=e^{-2\pi i g(b)(\alpha) }\,\mathrm{Id}_{\Hi(\Sigma,b,g^*(a))}
\end{multline}
where we identify $g:H^{q+1}(\Sigma,\hat{\Sigma};G)\rightarrow H^{q+1}(\Sigma;G)$ with the map Pontryagin dual to $\iota^*$ and use the isomorphism $(H^{q^*}(\Sigma;G^*))^*\cong H^{q+1}(\Sigma;G)$. It follows that the components $\Hi(\Sigma,b,g^*(a))$ consists of irreducible representation corresponding to $g(b)$. From this one can explicitly see that repeating the gauging once again (with a change of some sign conventions) recovers the original functor up to a natural isomorphism. For the objects we have:
\begin{equation}
    \Hi_{\tg \tg}(\Sigma,b')=
    \bigoplus_{\substack{\tilde{a}\in H^{q^*}(\Sigma,G)\\ (g^*(a)=\tilde{a})}}\Hi_\tg(\Sigma,a)|_{\text{irreps }g(b')}=
     \bigoplus_{\tilde{a}\in H^{q^*}(\Sigma,G)}\Hi(\Sigma,b,\tilde{a})=\Hi(\Sigma,b)
\end{equation}
where $b$ is the selected preimage of $\tilde{b}=g(b')$ (note that by construction $\sk^{**}\simeq \sk$ so that $b$ and $b'$ can be considered as elements of the same group $H^{q+1}(\Sigma,\hat{\Sigma})$), in the first gauging procedure. Therefore $g(b')=g(b)$ and indeed $\Hi_{\tg\tg}(\Sigma,b')\cong \Hi(\Sigma,b')$. The value on morphisms can be recovered (up to a value of the invertible TQFT with the value $\prod_{i=0}^d|H^{i}(M;G)|^{(-1)^i}$ on a closed manifold) using the fact that
\begin{equation}
    \sum_{A\in H^{q^*+1}(M,\hat{\partial M}^*;G)}e^{2\pi i \int (B-B')\cup A}\propto \delta_{B,B'},\qquad B,B'\in H^{q+1}(M,\hat{\partial M}).
\end{equation}

\section*{Acknowledgements}
PP would like to thank Francesco Benini, Christian Copetti, Francesco Costantino for useful discussions on related subjects. MF would like to thank Nils Carqueville for insightful discussions on the topic.

\begin{appendices}

\section{Details on (co)homology groups}\label{secA1}

In this appendix we collect various properties of (co)homology groups describing the gauge fields. We use the standard results that can be found in \cite{hat,maunder1996algebraic}.

\subsection{Fields on the boundaries \texorpdfstring{$\Si$}{}}
We shall start by analyzing the long exact sequence of cohomology groups describing fields on a boundary $(d-1)$-manifold $\Si$. We have the following long exact sequence of cohomology groups associated to the pair $(\hat\Sigma,\Sigma)$:
\begin{center}
    \begin{tikzcd}
        0 \arrow[r] & \coh[0]{\Si}{\hSi} \arrow[r] & H^0(\Si) \arrow[r] & H^0(\hSi) \arrow[r] & {} \\
        {} \arrow[r] & \cdots & {} & \cdots \arrow[r] & {} \\
        {} \arrow[r] & \coh[q]{\Si}{\hSi} \arrow[r] & H^q(\Si) \arrow[r,"\iota"] & H^q(\hSi) \arrow[r] & {} \\
        {} \arrow[r] & \coh[q+1]{\Si}{\hSi} \arrow[r,"g"] & H^{q+1}(\Si) \arrow[r] & H^{q+1}(\hSi) \arrow[r] & \cdots 
    \end{tikzcd}
\end{center}
Here, thanks to Poincar\'e-Alexander-Lefschetz duality and isomorphisms between singular and cellular homology we can see:
\begin{align*}
    \coh[k]{\Sigma}{\hSi} &= H_{d-1-k}({\sk}) = 0 &\text{ for any } k \leq q \\
    H^{k}(\hSi) &= H_{d-1-k}(\Si,{\sk}) = 0 &\text{ for any } k \geq q+1 \\
    H^k(\Sigma) &= H_{d-k-1}(\Sigma)= H_{d-k-1}(\Sigma,\sk)=H_{d-k-1}(\Sigma)=H^k(\hat{\Sigma}) &\text{ for any } k < q. \\
\end{align*}
The exact sequence tells us that in general $H^{q+1}(\Si) \neq \coh[q+1]{\Si}{\hSi}$, but there is only a surjective map $g:H^{q+1}(\Si) \rightarrow \coh[q+1]{\Si}{\hSi}$. There is also an injective map $\iota:H^q(\Sigma)\rightarrow H^{q}(\hat\Sigma)$. In particular, we can separate the following exact sequence:
\begin{equation}
0\longrightarrow H^q(\Sigma)\stackrel{\iota}{\longrightarrow} H^q(\hat\Sigma) \longrightarrow H^{q+1}(\Sigma,\hat\Sigma)
\stackrel{g}{\longrightarrow} H^{q+1}(\Sigma) \longrightarrow 0.
\end{equation}
This will be important for analyzing the dependence of the physical Hilbert space on the background gauge fields on a spatial slice $\Sigma$.

\subsection{Relating fields on \texorpdfstring{$M$}{the bordism} to fields on \texorpdfstring{$M_+$ and $M_-$}{on the boundaries}}
For understanding the \textit{composition} of bordisms $(M_-,B_-):(\Sigma_-,b_-)\rightarrow (\sigma,b_\sigma)$ and $(M_+,B_+):(\sigma,b_\sigma)\rightarrow (\Sigma,b_+)$ into $(M,B):(\Sigma_-,b_-)\rightarrow (\Sigma,b_+)$ we look at the following relative Mayer-Vietoris sequence:
\begin{center}
    \begin{tikzcd}
    0 \arrow[r] & \coh[0]{M}{\hSi} \arrow[r] & \coh[0]{M_+}{\hsi_+} \bigoplus \coh[0]{M_}{\hsi_-} \arrow[r] & 0 \arrow[r] & {}\\
    {} \arrow[r] & \cdots & {} & \cdots \arrow[r] & {}\\
    {} \arrow[r] & \coh[q]{M}{\hSi} \arrow[r] & \coh[q]{M_+}{\hsi_+} \bigoplus \coh[q]{M_}{\hsi_-} \arrow[r] & 0 \arrow[r] & {}\\
    {} \arrow[r] & \coh[q+1]{M}{\hSi} \arrow[r,"\phi"] & \coh[q+1]{M_+}{\hsi_+} \bigoplus \coh[q+1]{M_}{\hsi_-} \arrow[r,"\psi"] & H^{q+1}(\sigma,\hsi) \arrow[r] & \cdots
\end{tikzcd}
\end{center}
Here $\hsi_\pm := \hsi \sqcup \Si_\pm$ and $\hat\Sigma:=\hat\Sigma_+\sqcup \hat\sigma\sqcup \hat\Sigma_-$ and we have used that $H^{k}(\sigma,\hat{\sigma})\cong 0$ for $k\leq q$ as was argued above. From this sequence, we observe that there is an isomorphism $\coh[k]{M}{\hSi} \simeq \coh[k]{M_+}{\hsi_+} \oplus \coh[k]{M_-}{\hsi_-}$ for $k = 0, \ldots, q$. However, for the case of $\coh[q+1]{M}{\hSi}$, the isomorphism does not hold, and we need to understand this case more carefully.
We notice that the map $\psi$ sends $(B_+, B_-) \mapsto i_-^*(B_-) - i_+^*(B_+)$, where $i_-^*$ and $i_+^*$ are the pullbacks induced by the embeddings of the cut $\sigma$ into $M_\pm\subset M$. We require that $B_+$ and $B_-$ agree on the cut, meaning that $i_-^*(B_-) - i_+^*(B_+) = 0$. This implies that $(B_+, B_-) \in \text{Ker}\, \psi = \text{Im}\, \phi$. Since $\phi$ is injective, there exists a unique $\overline{B} \in \coh[q+1]{M}{\hSi}$ such that $\phi(\overline{B}) = (B_-, B_+)$ satisfies that condition. This means that even though we do not have an isomorphism, the condition to agree on the cut eliminates any ambiguity in relating $\overline{B}$ to $B_\pm$.

We, however, want to define the field $B \in \coh[q+1]{M}{\hdm}$ in terms of $B_\pm$. To address this, we consider the long exact sequence for the triple $(\hdm, \hSi, M)$ where we can relate $\coh[q+1]{M}{\hSi}$ and $\coh[q+1]{M}{\hdm}$. The sequence is given by:
\begin{center}
    \begin{tikzcd}
    0 \arrow[r] & \coh[0]{M}{\hSi} \arrow[r] & \coh[0]{M}{\hdm} \arrow[r] & H^0(\hsi) \arrow[r] & {}\\
    {} \arrow[r] & \cdots & {} & \cdots \arrow[r] & {}\\
    {} \arrow[r] & \coh[q]{M}{\hSi} \arrow[r] & \coh[q]{M}{\hdm} \arrow[r] & H^q(\hsi) \arrow[r,"\gamma"] & {}\\
    {} \arrow[r,"\gamma"] & \coh[q+1]{M}{\hSi} \arrow[r,"\eta"] & \coh[q+1]{M}{\hdm} \arrow[r] & 0 \arrow[r] & \cdots
\end{tikzcd}
\end{center}
From this sequence, we observe that the map $\eta$ is surjective, and we have:
\begin{equation}
    \coh[q+1]{M}{\hdm} = \bigslant{\coh[q+1]{M}{\hSi}}{\text{Ker}\eta}
\end{equation}
where
\begin{equation}
    |\text{Ker}\eta|=|\text{Im}\gamma| = \prod_{i=0}^{q} \left( \frac{|H^i(\hsi)| |\coh[i]{M}{\hSi}|}{|\coh[i]{M}{\hdm}|} \right)^{(-1)^{q-i}}.
\end{equation}

\subsection{Splitting fields on the cylinder into parallel and perpendicular components}
We now proceed to study the long exact sequence of the triple
\begin{equation}
       2\sk\equiv {\sk}_+\sqcup {\sk}_-  \subset {\sk} \times I \subset \Si \times I
\end{equation}
where $\sk_\pm$ is two copies of the same skeleton $\sk\subset\Sigma$ in $\Sigma_-=\Sigma\times\{0\}$ and $\Sigma_+=\sigma\times\{1\}$. We start by examining the short exact sequence of chain groups        
    \begin{equation}
        0 \longrightarrow C_n({\sk} \times I, 2{\sk}) \longrightarrow C_n(\Si \times I, 2{\sk}) \longrightarrow C_n(\Si \times I, {\sk} \times I) \longrightarrow 0
    \end{equation}
     which gives rise to the long exact sequence of homology groups:
    \begin{equation}
        \cdots \longrightarrow H_n({\sk} \times I, 2{\sk}) \longrightarrow H_n(\Si \times I, 2{\sk}) \longrightarrow H_n(\Si \times I, {\sk} \times I) \longrightarrow \cdots
    \end{equation}
Using the isomorphism between the singular and cellular homology groups, we can consider $C_n$ to be cellular chain complexes. On $\Sigma$ we use a cellular decomposition that defines the $(d-q-2)$-skeleton $\sk\subset \Sigma$. We can then consider the product cellular structure on $\Sigma\times I$ where $k$-cells consist of: (i)  $I=[0,1]$ times $(k-1)$-cells in $\Sigma$, (ii) $\{0\}$ or $\{1\}$ times $k$-cells in $\Sigma$.
    
The groups of interest in our case are:       
    \begin{align*}
        H_i({\sk} \times I, 2{\sk}) & = 0 &\text{ if } i > d-q-1 \\
        \homo[i]{\Si \times I}{2{\sk}} & &\\
         H_i(\Si \times I, {\sk} \times I) &\cong \homo[i]{\Si}{{\sk}} \\
         &=H_i(\Si) &\text{ if } i > d-q-1
    \end{align*}
where we used homotopy equivalence $(\Sigma\times I,\sk\times I)\simeq (\Sigma,\sk)$ that can be provided by the deformation retraction of $\Sigma\times I$ onto $\Sigma_+$ or $\Sigma_-$.

    Although our main interest lies in the $(q+1)$-th cohomology group (or $(d-q-1)$-th homology group), we will analyze the full sequence for later use. Using the isomorphisms above we have:
    \begin{center}
            \begin{tikzcd}
            0 \arrow[r] & 0 \arrow[r]
            & \homo[d]{\Si \times I}{2{\sk}} \arrow[r, "\cong"] & 0 \arrow[r] & {}\\
            {} \arrow[r] & 0 \arrow[r] & \homo[d-1]{\Si \times I}{2{\sk}} \arrow[r, "\cong"] & H_{d-1}(\Si) \arrow[r] & {} \\
            {} \arrow[r] & \cdots & {} & \cdots \arrow[r] & {} \\ 
            {} \arrow[r] & 0 \arrow[r] & \homo[d-q]{\Si \times I}{2\sk} \arrow[r,] & H_{d-q}(\Si) \arrow[r, "f"] & {} \\
            {} \arrow[r,"f"] &  \homo[d-q-1]{{\sk} \times I}{2\sk} \arrow[r] & \homo[d-q-1]{\Si \times I}{2\sk} \arrow[r] & H_{d-q-1}(\Si,{\sk}) \arrow[r,"h"] & {} \\
            {} \arrow[r,"h"] &  \homo[d-q-2]{{\sk} \times I}{2\sk} \arrow[r] & \homo[d-q-2]{\Si \times I}{2\sk} \arrow[r] & H_{d-q-2}(\Si,{\sk}) \arrow[r] & \cdots 
        \end{tikzcd}
    \end{center}
    
   Now we analyze the maps $f$ and $g$ explicitly in terms of the short exact sequence of chain complexes and show that they are zero maps. Its relevant part has the following form:
           \begin{tikzcd}
           0 \arrow[r]
           & 0 \arrow[d] \arrow[r, "i_{d-q}"]
           & C_{d-q}(\Si \times I, 2\sk) \arrow[r, "j_{d-q}"] \arrow[d, "\partial"]
           & C_{d-q}(\Si \times I,{\sk} \times I) \arrow[r] \arrow[d]
           & 0\\
           0 \arrow[r]
           & C_{d-q-1}({\sk} \times I, 2\sk) \arrow[d] \arrow[r, "i_{d-q-1}"]
           & C_{d-q-1}(\Si \times I, 2\sk) \arrow[r, "j_{d-q-1}"] \arrow[d, "\partial"]
           & C_{d-q-1}(\Si \times I,{\sk} \times I) \arrow[r] \arrow[d]
           & 0\\
           0 \arrow[r]
           & C_{d-q-2}({\sk} \times I, 2\sk) \arrow[r, "i_{d-q-2}"]
          & C_{d-q-2}(\Si \times I, 2\sk) \arrow[r, "j_{d-q-2}"]
          & C_{d-q-2}(\Si \times I,{\sk} \times I) \arrow[r]
          & 0
      \end{tikzcd}
 For the map $f$ we have $f([\delta])=[\omega]$ where $j_{d-q}(\varepsilon)=\delta$ and $\partial \varepsilon = i_{d-q-1}(\omega)$. Here $\delta$ is a representative of a class in $\homo[d-q]{\Si \times I}{{\sk} \times I} \cong \homo[d-q]{\Sigma}{{\sk}} = H_{d-q}(\Sigma)$. From the isomorphisms it follows that one can take $\delta$ to be a $(d-q)$-cycle in $\Si\times \{0\}$. We can then simply take $\varepsilon=\delta\in C_{d-q}(\Si \times I, 2\sk)$. Then $\partial \varepsilon =0$, which implies $\omega = 0$. Therefore $f=0$.
 
    A similar reasoning works for the map $h$. We have $h([\delta])=[\omega]$  where $j_{d-q-1}(\varepsilon)=\delta$ and $\partial \varepsilon = i_{d-q-2}(\omega)$.   Here $\delta$ is a representative of a class in $\homo[d-q-1]{\Si \times I}{{\sk} \times I} \cong \homo[d-q-1]{\Sigma}{{\sk}}$. Therefore one can take $\delta$ to be  $(d-q-1)$-cycle in $\Si\times\{0\}$ relative to ${\sk}\times\{0\}$. We can then simply take $\varepsilon=\delta\in C_{d-q-1}(\Si \times I, 2\sk)$ . Then $\partial \varepsilon =0$, which implies $\omega = 0$. Therefore  $h=0$.\bigskip

    This leads us to the following short exact sequence:
        \begin{center}
        \begin{tikzcd}
            0 \arrow[r]
            & \homo[d-q-1]{{\sk} \times I}{2\sk} \arrow[r]
            & \homo[d-q-1]{\Si \times I}{2\sk} \arrow[r] 
            & H_{d-q-1}(\Si,\sk) \arrow[r] 
            & 0\\
        \end{tikzcd}
    \end{center}
The sequence is split by the map $H_{d-q-1}(\Si,\sk)\rightarrow H_{d-q-1}(\Sigma\times I,2\sk)$ induced by the inclusion $(\Sigma,\sk)\hookrightarrow (\Sigma\times I,2\sk)$ which identifies $\Sigma$ with $\Sigma\times \{0\}$. 
    
    As a consequence, using Poincar\'e-Alexander-Lefschetz daulity we can now establish the following relationships:    
    \begin{align*}
        \coh[0]{\Si \times I}{\hSi_+\sqcup \hSi_-} &\cong \homo[d]{\Si \times I}{2\sk} = 0 \\
        \coh[k]{\Si \times I}{\hSi_+\sqcup \hSi_-} &\cong \homo[d-k]{\Si \times I}{2\sk} \cong H_{d-k}(\Si) \cong H^{k-1}(\Si) \text{ if } k=1,..,q \\
        \coh[q+1]{\Si \times I}{\hSi_+\sqcup \hSi_-} &\cong \homo[d-q-1]{\Si \times I}{2\sk} \cong \homo[d-q-1]{{\sk} \times I}{2\sk} \oplus H_{d-q-1}(\Si)  \\
        &\cong H^{q+1}(\Si,\hSi) \oplus H^{q}(\hSi)
    \end{align*}
    where we used that $\homo[d-q-1]{{\sk} \times I}{\sk_\pm} \cong H_{d-q-2}({\sk}) \cong \coh[q+1]{\Si}{\hSi} $. This isomorphism can be argued by considering the long exact sequence for the pair $(2\sk,\sk\times I)$:
    \begin{equation}
\begin{tikzcd}
 \ldots \ar[r] &  H_{d-q-1}(\sk)\oplus H_{d-q-1} (\sk) \ar[r] &  H_{d-q-1}(\sk\times  I) \ar[r] &   H_{d-q-1}(\sk\times I,2\sk) \ar[r] &  {}  \\
& 0 \ar[u,equal] &   0 \ar[u,equal] &    &  
\\
 \ar[r] &  H_{d-q-2}(\sk)\oplus H_{d-q-2} (\sk)\ar[r,"r"] & H_{d-q-2}(\sk\times I) \ar[r] &   H_{d-q-2}(\sk\times I,2\sk)  \ar[r] & \ldots  \\
& &   H_{d-q-2}(\sk) \ar[u,equal] &      &  
\end{tikzcd}
\end{equation}
The map $r$ can be understood as the sum map and therefore $H_{d-q-1}(\sk\times I,2\sk)\cong \mathrm{Ker}\, r\cong H_{d-q-2}(\sk)$.

\end{appendices}

\bibliography{mybibliography}
\bibliographystyle{JHEP}

\end{document}